\theoremstyle{plain}
\newtheorem*{theoremA*}{Theorem A}
\newtheorem*{theoremB*}{Theorem B}
\newcommand\xleftrightarrow[2][]{%
  \ext@arrow 9999{\longleftrightarrowfill@}{#1}{#2}}
\newcommand\longleftrightarrowfill@{%
  \arrowfill@\leftarrow\relbar\rightarrow}
\newtheorem{thm}{Theorem}
\newtheorem{lem}[thm]{Lemma}
\newtheorem{prop}[thm]{Proposition}
\newtheorem{defn}[thm]{Definition}
\newtheorem{cor}[thm]{Corollary}
\newtheorem{example}[thm]{Example}
\newtheorem{assumption}[thm]{Assumption}
\newtheorem{conjecture}[thm]{Conjecture}
\begin{document}

\title[Darboux coordinates for Novikov algebras]{Darboux coordinates for Hamiltonian structures defined by Novikov algebras }
\date{\today}

\author{Ian A.B. Strachan}

\address{School of Mathematics and Statistics\\ University of Glasgow\\Glasgow G12 8QQ\\ U.K.}
\email{ian.strachan@glasgow.ac.uk}

\keywords{} \subjclass{}

\begin{abstract}
The Gauss-Manin equations are solved for a class of flat-metrics defined by Novikov algebras, this generalizing a result of Balinskii and Novikov who solved this problem in the case of commutative Novikov algebras (where the algebraic conditions reduce to those of a Frobenius algebra). The problem stems from the theory of first-order Hamiltonian operators and their reduction to a constant, or Darboux, form. The monodromy group associated with the Novikov algebra gives rise to an orbit space, which is, for a wide range of Novikov algebras, a cyclic quotient singularity.

\end{abstract}

\maketitle

\tableofcontents

\section{Introduction}

The existence, and construction, of solutions to the Gauss-Manin equations
\[
{}^g\nabla dv=0
\]
is a fundamental problem in many areas of mathematics - singularity theory, the theory of integrable systems, topological field theory, to name just a few \cite{Varchenko}. In this equation the connection ${}^g\nabla$ is the Levi-Civita connection for some
metric $g$, and the compatibility of this over-determined systems requires the metric to be flat. Solution of the Gauss-Manin equations then give the flat coordinates: the coordinates in which the metric coefficients are
constants. Such coordinates are also called Darboux coordinates - there are analogous to the existence of coordinates in which a non-degenerate sympletic form takes a constant, anti-diagonal form.

\medskip

The problem that will be addressed in this paper concerns the explicit construction of solutions to the Gauss-Manin equations for a class of metrics where the corresponding (inverse) metric is linear in the coordinates.
This problem has its origins in the theory of Hamiltonian structures (see section \ref{origins} below) and these structures are a direct multi-component generalization of the second Hamiltonian structure of the KdV hierarchy.
Such metrics are defined in terms of a so-called Novikov algebra $\mathcal{A}$ - a complex vector space equipped with a multiplication $\circ$ satisfying the defining properties
\begin{eqnarray*}
a \circ(b \circ c) - b \circ (a \circ c) & = & (a \circ b) \circ c - (b \circ a) \circ c\,,\\
(a \circ b) \circ c & = & (a \circ c) \circ b\,.
\end{eqnarray*}
Such algebras were first defined by Gelfand and Dorfman \cite{GD}, were studied further by Balinskii and Novikov \cite{BN}, and named Novikov algebras by Osborn \cite{O}.

\medskip The problem of finding the flat coordinates in the restricted case where the multiplication is commutative (where the above conditions reduce associativity equations) was solved by Balinskii and Novikov \cite{BN}. Here we solve the general case.

\medskip

This result of Balinskii and Novikov may be obtain by using the theory of commuting differential operators. The Gauss-Manin equations may be written
in the form
\[
L^{(i)} {\boldsymbol\xi} + \Lambda^{(i)} {\boldsymbol\xi}={\bf 0}\,.
\]
for certain vector fields $L^{(i)}$ and matrices $\Lambda^{(i)}\,.$
However in the general case these operators/matrices do not commute. What characterizes the Balinskii-Novikov result is that they restrict to the case where the operators do commute - one has an associated Abelian Lie algebra. The
fundamental
result that underlines the results in this paper is that, in general, the Lie algebra generated by these operators is solvable \cite{BK}. This rests on the fundamental result of Zelmanov \cite{Z} on the structural theory of Novikov
algebra,
this
answering a question raised in the original Balinskii-Novikov paper.

\medskip

The results of this papers also shows the potential for the development of a rich theory of solvable, as opposed to commutative, vector fields, within the theory of integrable systems and other areas \cite{CFG,Grab,Kaw}.

\subsection{Hamiltonian structures}\label{origins}

The origins of this problems stems from the study of Hamiltonian operators and the theory of bi-Hamiltonian systems. Given two functionals (defined under suitable boundary conditions)
\[
F=\int f({\bf u}\,,{\bf u}_X\,,\ldots ) \, dX\,,\qquad G=\int g({\bf u}\,,{\bf u}_X\,,\ldots ) \, dX
\]
one may define a bracket
\[
\{F,G\} = \int \frac{\delta F}{\delta u^i} \mathcal{H}^{ij} \frac{\delta G}{\delta u^j} \, dX
\]
and for this to be a Poisson bracket (so skew and satisfying the Jacobi equation) places conditions on the Hamiltonian operator $\mathcal{H}^{ij}\,.$

\medskip

The conditions for the first-order local operators
\begin{equation}
\mathcal{H}^{ij}({\bf u}) = g^{ij} ({\bf u}) \frac{d~}{dX} + \Gamma^{ij}_k({\bf u}) u^k_X
\label{hamiltonian}
\end{equation}
were first derived by Dubrovin and Novikov \cite{DN}.

\begin{thm}
Under the condition $\det(g^{ij} )\neq 0$ the conditions for $\mathcal{H}^{ij}$ to define a Poisson bracket are:
\begin{itemize}
\item[(a)] $g^{ij}$ is a symmetric tensor, and interpreted as the inverse of a (pseudo)-Riemannian metric with
associated torsion-free metric connection ${}^g\nabla$;
\item[(b)] $\Gamma^{ij}_k = - g^{ir} \Gamma^j_{rk}$, where $\Gamma^{i}_{jk}$ are the Christoffel symbols of the connection ${}^g\nabla$;
\item[(c)] the curvature of ${}^g\nabla$ is zero.
\end{itemize}
\end{thm}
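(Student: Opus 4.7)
The plan is to impose the two defining axioms of a Poisson bracket, skew-symmetry and the Jacobi identity, on the bracket $\{F,G\}$ separately, and to translate each into tensorial conditions on the coefficients $g^{ij}$ and $\Gamma^{ij}_k$ of $\mathcal{H}$. The non-degeneracy hypothesis $\det(g^{ij})\neq 0$ will allow a re-parametrisation of the zeroth-order coefficients in terms of a genuine affine connection, after which the conditions acquire invariant geometric meaning.

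I would first derive (a) and the defining relation of (b) from skew-symmetry alone. Writing the bracket in the form $\{F,G\}=\int (\delta F/\delta u^i)\mathcal{H}^{ij}(\delta G/\delta u^j)\,dX$ and integrating by parts to shift the $X$-derivative, skew-symmetry $\{F,G\}=-\{G,F\}$ is equivalent to formal skew-adjointness of the matrix differential operator $\mathcal{H}$. For a first-order operator this condition produces two identities, obtained by matching the coefficients of $\partial_X$ and of $u^k_X$:
\[
g^{ij}=g^{ji},\qquad \partial_k g^{ij}=\Gamma^{ij}_k+\Gamma^{ji}_k.
\]
The first identity is (a). Using $\det(g^{ij})\neq 0$, the substitution $\Gamma^{ij}_k=-g^{is}\Gamma^{j}_{sk}$ is a uniquely solvable change of variables, and the second identity then reads $\nabla_k g^{ij}=0$, i.e.\ metric-compatibility of the connection with symbols $\Gamma^i_{jk}$.

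Next, I would expand the Jacobiator $J(F,G,H)=\{\{F,G\},H\}+\text{cyclic}$ and collect its terms by the multi-index pattern of the derivatives $u^k_X$, $u^k_{XX}$ that appears. Since $F,G,H$ are arbitrary, each such coefficient must vanish as a tensorial identity on the target manifold. The piece of highest derivative order, quadratic in $u^k_X$, reduces after substitution of the relation above to the torsion-free condition $\Gamma^i_{jk}=\Gamma^i_{kj}$, which together with $\nabla g=0$ characterises $\Gamma^i_{jk}$ as the Christoffel symbols of the Levi-Civita connection of $g$, yielding (b). The remaining lower-order piece, involving $u^k_{XX}$ and second $u$-derivatives of $\Gamma^{ij}_k$, collapses to the vanishing of the Riemann tensor of ${}^g\nabla$, which is (c). Every implication above is reversible, so the three conditions are jointly necessary and sufficient.

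The main obstacle is the combinatorial bookkeeping in the Jacobi step: the raw expansion is a long linear combination of monomials in $u^k_X,u^k_{XX},\ldots$ whose coefficients are polynomial expressions in $g^{ij}$, $\Gamma^{ij}_k$ and their $u$-derivatives, and it is not immediately transparent which combinations can be extracted independently. The cleanest route is either to test $J$ on pointwise-localised functionals (delta-type densities) at an arbitrary field configuration and covector, or equivalently to compute the Schouten bracket of the associated Hamiltonian bi-vector with itself; in both formalisms only the top-derivative-order contribution survives at each stage, and the corresponding tensorial identity can be read off directly.
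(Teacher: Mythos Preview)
The paper does not prove this theorem at all: it is quoted as the foundational result of Dubrovin and Novikov, with a citation to \cite{DN}, and is used throughout as an established fact. So there is no ``paper's own proof'' to compare against.

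Your sketch follows the standard route to the Dubrovin--Novikov theorem and is essentially sound: skew-adjointness of $\mathcal{H}$ yields $g^{ij}=g^{ji}$ and $\partial_k g^{ij}=\Gamma^{ij}_k+\Gamma^{ji}_k$, the non-degeneracy allows the reparametrisation $\Gamma^{ij}_k=-g^{is}\Gamma^{j}_{sk}$, and the Jacobi identity then forces the connection to be torsion-free and flat. Two small points of caution. First, your labelling of which Jacobiator piece yields which condition is slightly garbled: terms involving $u^k_{XX}$ and terms quadratic in $u^k_X$ carry the \emph{same} total $X$-degree, so calling one ``highest order'' and the other ``lower order'' is not quite right; in the actual computation both torsion and curvature conditions emerge from the degree-two part of the Jacobiator, separated by their tensorial symmetry type rather than by derivative count. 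Second, the curvature tensor involves \emph{first} $u$-derivatives of $\Gamma^{ij}_k$ (together with quadratic $\Gamma\Gamma$ terms), not second derivatives as you wrote. Neither point undermines the strategy, but they would need tightening in a full write-up.
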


\noindent Thus, by a fundamental result in differential geometry, there exists a coordinate system in which the components of the metric are constants - the so-called flat coordinate system, and
these are given by solutions of the corresponding Gauss-Manin system ${}^g\nabla d{\bf v} = {\bf 0}\,.$
In this system of coordinates the Hamiltonian operators takes the constant form
\[
\mathcal{H}^{ij}({\bf v}) = \eta^{ij} \frac{d~}{dX}\,.
\]

\medskip

The conditions for an operator of the form
\[
\mathcal{H}^{ij}({\bf u}) = \left( \Gamma^{ij}_k + \Gamma^{ji}_k\right) u^r \frac{d~}{dX} + \Gamma^{ij}_k u^k_X
\]
to define a Poisson bracket were first derived by Gelfand and Dorfman \cite{GD}\,. It is important to note that with specific metric, the symbols $\Gamma^{ij}_k$ turn out to be {\sl constants} for this class of metrics
(but not the symbols $\Gamma^i_{jk}$ which do depend on the $u^i$).
While their approach predates the above Theorem, their results may be quickly obtained as a special case of this more general result.
The geometric conditions for zero-curvature may be interpreted as purely algebraic conditions on the algebra defined by the constants $\Gamma^{ij}_k\,,$
\[
e^i \circ e^j = \Gamma^{ij}_k e^k
\]
and these conditions are the defining properties of a Novikov algebra.

\medskip

The theory extends to higher-order operators - the paradigm being the Miura transformation
\[
u = \frac{1}{2} v^2 - v_X
\]
that maps the second Hamiltonian operator of the KdV hierarchy to constant, or Darboux, form.
\begin{eqnarray*}
\mathcal{H}^{ij} (u) & = & \frac{d^3~}{dX^3} + 2 u \frac{d~}{dX} + u_X\,,\\
&&\\
& \updownarrow & u = \frac{1}{2} v^2 - v_X\\
&&\\
\mathcal{H}^{ij} (v) & = & \frac{d~}{dX}
\end{eqnarray*}
We will return to the discussion of higher-order operators in the final section. The problem that will be addressed here is the construction of the flat, or Darboux, coordinates for Hamiltonian operators which are homogeneous and linear in the fields - this being a direct multi-component generation of the (dispersionless) part of the second Hamiltonian structure of the KdV equation. Thus we need to solve the Gauss-Manin equations for the metric
\[
g^{-1} = \sum_{i,j,k} \left( \Gamma^{ij}_k + \Gamma^{ji}_k\right)  u^k \frac{\partial~}{\partial u^i} \otimes \frac{\partial~}{\partial u^j}
\]
where the constants $\Gamma^{ij}_k$ define a Novikov algebra.

\subsection{Outline}

The rest of the paper is laid out as follows. In Section \ref{algprelim} the formal definition of the various algebraic structures are given and various foundational results presented. In particular, the key result on which the
results of this papers rest - the solvability of a certain Lie algebra $\mathfrak{g}(\mathcal{A})$ constructed from the Novikov algebra - is given. This key result follows from the work of \cite{BG} and the structural theory of Novikov algebras developed in \cite{Z}. In Section \ref{GMeqn} the
Gauss-Manin equations are written as a first-order matrix system and the result of Balinskii-Novikov (the case of Abelian structures) rederived. In Section \ref{MainSection} a version of Frobenius' theorem for commuting vector fields is
extended to the case of vector fields that form a solvable (rather than Abelian) Lie algebra. Such a partial-straightening of the vector fields introduces a subsidiary set of coordinates - labeled $w^i$ - in which in Gauss-Manin equations can be solved.

\medskip

In Section \ref{MainSection2} the full problem is solved. In particular, under a non-degeneracy condition and the assumption of the existence of a right-identity, one obtains explicit formulae:

\begin{theoremA*}
Let $\mathcal{A}$ be a Novikov algebra with a right-identity and satisfying the non-degeneracy conditions in Assumption \ref{assumption}.  The transformation ${\bf u}={\bf u}({\bf v})$ is found by eliminating the ${\bf w}$-variables from the equations
\begin{eqnarray*}
v_i ({\bf w}) & = & \left( \prod^{\longrightarrow} e^{+ \Lambda^{(r)} w^r}\right)_{i1}\,,\\
u^i ({\bf w}) & = & \left( \prod^{\longleftarrow} e^{-\Xi^{(r)} w^r}\right)_{i\bullet}\,.
\end{eqnarray*}
where $v_i=\eta_{ij}v^j\,.$
\end{theoremA*}
\noindent Here $\Xi^{(i)}$ and $\Lambda^{(i)}$ are two sets of matrices that define representations of the solvable Lie algebra $\mathfrak{g}(\mathcal{A})\,.$
By consequences of Lie's theorem, these transformation are triangular and the remaining problem is just the linear algebraic problem of the elimination of the ${\bf w}$-coordinates to give $u^i=u^i({\bf v})\,.$

\medskip

These functions have various transformation properties with respect to the monodromy group, and these are explored in Section \ref{monodromysection}. With various natural assumptions one can derived stronger results.
The Gauss-Manin equation have a natural monodromy group which encodes the branching of the solutions around the
discriminant locus, which here corresponds to those points where $\det(g^{ij})=0\,.$ Under various natural assumptions this is a cyclic group which acts on the ${\bf v}$-space giving a quotient cyclic singularity.
This can be summarized in the main theorem:

\begin{theoremB*}
Suppose that $\lambda_i\in\mathbb{Z}$ for all $i=1\,,\ldots\,,n\,.$ With the conditions on the Novikov algebra contained in Assumption \ref{assumptions2}, the functions $u^i({\bf v})$ are invariant under the monodromy group and
\[
u^i({\bf v}) \in \mathbb{C}^{\mathcal{W}(\mathcal{A})} [ v^1\,,\ldots\,,v^{n-1}, v^n\,,\left(v^n\right)^{-1}]
\]
where the monodromy group is the cyclic group $\mathcal{W}(\mathcal{A})\cong \mathbb{Z}_{\lambda_n+1} [ 1=\lambda_1\,,\ldots\,,\lambda_n]$  which act on the $v_i$ variables by
\[
v_i({\bf w}) \mapsto \varepsilon^{\lambda_i} v_i({\bf w})\,.
\]
where $\varepsilon=e^{\frac{2\pi\sqrt{-1}}{\mu_n}}$ and $\mu_n=\lambda_n+1\,.$
The constants $\lambda_i$ satisfy the monodromy constraint $\lambda_i+\lambda_j=\lambda_n+1$ if $\eta_{ij} \neq 0\,.$

\medskip

In terms of the ${\bf w}$-variables the monodromy group acts as a translation in a single variable
\begin{equation}
\begin{array}{cclc}
w^1 & \mapsto & w^1 + \frac{2 \pi \sqrt{-1}}{\mu_n}\,,&\\
w^i & \mapsto & w^i\,, &\qquad\qquad i=2\,,\ldots\,,n\,.
\end{array}
\end{equation}

\end{theoremB*}

\noindent The specific case where the monodromy group is the cycle group $\mathbb{Z}_{n+1} [1\,,2\,,\ldots\,, n]$ is studied in more detail and it is shown that the function $u^i({\bf v})$ are homogeneous polynomials of degree $(i+1)\,.$ The monodromy group acts on the ${\bf v}$-space to form an orbit space, and this is related to the well-studied notion of a cyclic quotient singularity. This is explored in Section \ref{quotientsection}.
In the penultimate section, Section \ref{cotangent}, it is shown how the Novikov multiplication may be seen as a multiplication on the cotangent bundle and hence is a geometric structure independent of the specific
coordinate systems used in the previous sections. This utilizes the biHamiltonian structure obtained from the metric $g^{-1}$ and the cocycle $\eta^{-1}\,.$
Finally, in Section \ref{conclusion},  the question of a full dispersive Muira transformation is commented
on - this relates the results in this paper to recent work on Poisson cohomology.

\section{Algebraic preliminaries}\label{algprelim}

In this section we draw together the various definitions and basic results that will be used throughout this paper. We begin with the definition of an algebra, first given by Gelfand and Dorfman \cite{GD}.

\begin{defn}

\begin{itemize}
\item[(a)] Novikov algebra is a vector space $\mathcal{A}$ equipped with a composition (called multiplication) $\circ:\mathcal{A}\times\mathcal{A} \rightarrow\mathcal{A}$ with the properties

\begin{eqnarray}
a \circ(b \circ c) - b \circ (a \circ c) & = & (a \circ b) \circ c - (b \circ a) \circ c\,,\label{N1}\\
(a \circ b) \circ c & = & (a \circ c) \circ b\label{N2}
\end{eqnarray}
for all $a\,,b\,,c\in\mathcal{A}\,.$

\item[(b)] A cocycle on $\mathcal A$ is a symmetric bilinear map $\langle\,,\rangle\,: \mathcal{A} \times \mathcal{A} \rightarrow \mathbb{C}$ with the property
\[
\langle a \circ b , c \rangle =   \langle a, c \circ b\rangle
\]
for all $a\,,b\,,c\in\mathcal{A}\,.$

\end{itemize}

\end{defn}
Algebras with the single property (\ref{N1}) have been studies since the work of Cayley in the 19$^{\rm th}$ Century, and are known by a variety of names, for example, as:
\begin{itemize}
\item{} left-symmetric algebras;
\item{} pre-Lie algebras;
\item{} Koszul-Vinberg algebras.
\end{itemize}

\noindent The relations in part (a) may be written more succinctly in terms of left and right multiplication operators $L_a\,,R_b$ (defined via $L_ab=a\circ b = R_ba$ for all $a\,,b\in\mathcal{A}$):
\begin{eqnarray*}
\left[ L_a,L_b \right] & = & L_{[a,b]}\,,\\
\left[ R_a,R_b \right] & = & 0
\end{eqnarray*}
where $[a,b]=a\circ b - b \circ a\,.$ Thus a Novikov algebra is a left-symmetric algebra whose right multiplications commute.

\medskip

The following proposition is straightforward, but does not appear to have appeared in the literature before.

\begin{prop} Let $\omega\in\mathcal{A}^\star\,.$ Then
\begin{equation}
\langle a,b\rangle=\omega\left( a \circ b + b \circ a \right)
\label{cocycle}
\end{equation}
defines a cocycle on $\mathcal{A}\,.$
\end{prop}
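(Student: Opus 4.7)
Symmetry and bilinearity of $\langle\,,\rangle$ are immediate from the symmetry of $a\circ b + b\circ a$ in $a$ and $b$ and the linearity of $\omega$ and $\circ$. So the only nontrivial point is verification of the cocycle identity
\[
\langle a\circ b, c\rangle = \langle a, c\circ b\rangle.
\]
I would write this out as
\[
\omega\bigl((a\circ b)\circ c + c\circ(a\circ b)\bigr)= \omega\bigl(a\circ(c\circ b) + (c\circ b)\circ a\bigr),
\]
and aim to show the stronger \emph{algebraic} identity that the difference
\[
D(a,b,c) := (a\circ b)\circ c + c\circ(a\circ b) - a\circ(c\circ b) - (c\circ b)\circ a
\]
is identically zero in $\mathcal{A}$, so that the cocycle property holds for every $\omega\in\mathcal{A}^\star$.

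The plan is to use the right-commutativity axiom (\ref{N2}) to convert the two ``outer'' terms of $D(a,b,c)$ into products with $b$ on the right, and then use the left-symmetry axiom (\ref{N1}) to match them against the two ``inner'' terms. Concretely, axiom (\ref{N2}) gives $(a\circ b)\circ c = (a\circ c)\circ b$ and $(c\circ b)\circ a = (c\circ a)\circ b$, so
\[
D(a,b,c) = \bigl[(a\circ c)-(c\circ a)\bigr]\circ b + \bigl[c\circ(a\circ b) - a\circ(c\circ b)\bigr].
\]
Applying axiom (\ref{N1}) with the triple $(c,a,b)$ in place of $(a,b,c)$ yields
\[
c\circ(a\circ b) - a\circ(c\circ b) = (c\circ a)\circ b - (a\circ c)\circ b,
\]
and substituting this back into the previous display gives $D(a,b,c)=0$.

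I expect no real obstacle here: the argument is a short calculation, and the ``trick'' is simply to notice that (\ref{N2}) aligns the outer products so that the two Novikov axioms can be applied against each other. Once $D\equiv 0$ is established, applying $\omega$ gives the cocycle identity for arbitrary $\omega\in\mathcal{A}^\star$, completing the proof.
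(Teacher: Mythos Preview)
Your proof is correct and follows the same approach the paper indicates: a direct computation using both Novikov axioms (\ref{N1}) and (\ref{N2}). In fact you prove the slightly stronger algebraic statement that $D(a,b,c)=0$ in $\mathcal{A}$ itself, which immediately gives the cocycle identity for every $\omega\in\mathcal{A}^\star$; the paper omits the details but describes exactly this kind of verification.
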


\noindent The proof is by direct computation and uses both the defining properties (\ref{N1},\ref{N2}) of a Novikov algebra and will be omitted.

\medskip

\noindent The following is also immediate:

\begin{lem}

\begin{itemize}

\item[(a)] Let $\mathcal{A}$ be a pre-Lie algebra. Then the bracket
\[
[a,b]=a\circ b - b \circ a
\]
defines a Lie algebra $\mathfrak{g}(\mathcal{A})\,.$ In particular, the Lie algebra is Abelian if and only if the Novikov algebra is commutative (and hence associative).

\item[(b)] Let $\mathcal{A}$ be a Novikov algebra with cocycle $\langle\,,\rangle$ defined by (\ref{cocycle}). Then the cocycle is also a cocycle for the Lie algebra $\mathfrak{g}(\mathcal{A})\,,$
\[
\langle a,[b,c]\rangle +\langle b,[c,a]\rangle+\langle c,[a,b]\rangle=0\,.
\]

\end{itemize}

\end{lem}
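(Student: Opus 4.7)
The plan is to prove both parts by direct expansion in the Novikov multiplication $\circ$ and the respective defining axioms; both calculations are purely combinatorial.

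For part (a), the bracket $[a,b]=a\circ b-b\circ a$ is manifestly skew, so only the Jacobi identity requires work. I would introduce the associator
\[
(x,y,z):=x\circ(y\circ z)-(x\circ y)\circ z,
\]
under which the left-symmetry axiom (\ref{N1}) reads precisely $(x,y,z)=(y,x,z)$. Expanding the cyclic sum $[a,[b,c]]+[b,[c,a]]+[c,[a,b]]$ produces twelve monomials that regroup naturally into six associators:
\[
(a,b,c)-(a,c,b)-(b,a,c)+(b,c,a)+(c,a,b)-(c,b,a),
\]
and the three pairs $\{(a,b,c),(b,a,c)\}$, $\{(a,c,b),(c,a,b)\}$, $\{(b,c,a),(c,b,a)\}$ cancel by the above symmetry. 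Notably only (\ref{N1}) is used, consistent with the Lie bracket being available for any pre-Lie algebra. For the second assertion of (a), Abelianness of $\mathfrak{g}(\mathcal{A})$ is by definition equivalent to commutativity of $\circ$; and with $b\circ a=a\circ b$, axiom (\ref{N1}) forces every associator to vanish, giving associativity.

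For part (b), I would first extract from the cocycle axiom $\langle a\circ b,c\rangle=\langle a,c\circ b\rangle$ together with the symmetry of $\langle\,,\rangle$ the auxiliary identity
\[
\langle a,b\circ c\rangle=\langle b,a\circ c\rangle,
\]
obtained by applying the cocycle condition to the triple $(b,c,a)$ and then using symmetry on the left-hand side. Expanding
\[
\langle a,[b,c]\rangle+\langle b,[c,a]\rangle+\langle c,[a,b]\rangle
\]
yields six terms; applying the auxiliary identity to three of them (say $\langle a,b\circ c\rangle$, $\langle a,c\circ b\rangle$, $\langle b,c\circ a\rangle$) rewrites them as $\langle b,a\circ c\rangle$, $\langle c,a\circ b\rangle$, $\langle c,b\circ a\rangle$, after which the six terms cancel in pairs against those already present with opposite sign.

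The only real obstacle is the clerical bookkeeping in the twelve-term expansion of part (a); both arguments are otherwise elementary. It is worth remarking that part (b) uses neither the explicit formula (\ref{cocycle}) nor the right-commutativity axiom (\ref{N2}), only the abstract cocycle axioms (symmetry and invariance), so the identity holds for any cocycle on any pre-Lie algebra.
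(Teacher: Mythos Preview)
Your proof is correct. The paper itself gives no argument for this lemma, simply introducing it with ``The following is also immediate,'' so there is nothing to compare against; your associator bookkeeping for part~(a) and the auxiliary identity $\langle a,b\circ c\rangle=\langle b,a\circ c\rangle$ for part~(b) are the standard routes and are carried out cleanly.

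One small remark on part~(a): the sentence ``with $b\circ a=a\circ b$, axiom~(\ref{N1}) forces every associator to vanish'' is true but deserves a line of justification, since substituting commutativity into~(\ref{N1}) yields only $a\circ(b\circ c)=b\circ(a\circ c)$, not associativity directly. The clean way is to note that commutativity gives $(x,y,z)=-(z,y,x)$, and combining this with the left-symmetry $(x,y,z)=(y,x,z)$ one cycles through to $(a,b,c)=-(a,b,c)$, hence $(a,b,c)=0$ over~$\mathbb{C}$. (Alternatively one can invoke~(\ref{N2}), which with commutativity gives $(a\circ b)\circ c=(b\circ c)\circ a=a\circ(b\circ c)$ immediately; but your claim that~(\ref{N1}) alone suffices is correct and slightly sharper.) Your closing observation that part~(b) uses only the abstract cocycle axioms, not the specific form~(\ref{cocycle}) or axiom~(\ref{N2}), is a worthwhile strengthening.
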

\noindent Thus Novikov algebras are algebras whose left-multiplications forms a Lie-algebra $\mathfrak{g}(\mathcal{A})$ and whose right multiplications commute.

\medskip

Introducing a basis $\{ e^i\,, i=1\,,\ldots\,,n\}$ for $\mathcal{A}$ one may define the structure constants for the algebra
\[
e^i \circ e^j = \Gamma^{ij}_k e^k\,, \qquad \qquad i\,,j = 1\,, \ldots\,, n\,,
\]
and with these the structure constants of the Lie algebra $\mathfrak{g}(\mathcal{A})$ are given by
\begin{eqnarray*}
[e^i,e^j] & = & c^{ij}_k e^k \,,\\
& = & \left(\Gamma^{ij}_k - \Gamma^{ji}_k\right) e^k\,.
\end{eqnarray*}
It is clear that the map $L:\mathfrak{g}(\mathcal{A}) \rightarrow gl\left( \mathfrak{g}(\mathcal{A}) \right)$ given by
\[
a \mapsto L_a
\]
defines a representation of the Lie algebra $\mathfrak{g}(\mathcal{A})\,.$ However, there is no suitable representation theory for the Novikov algebra itself owing to the non-associativity of the multiplication.

Such a basis is only defined up to linear transformations. Using this freedom we define a distinguished basis element $e^\bullet$ with which the form $\omega$ takes the following form:
\begin{defn}
\[
\omega(a) = {\rm coefficient~of~} e^{\bullet} {\rm~in~expansion~of~}(a)\,,\qquad a\in \mathcal{A}\,.
\]
\end{defn}
\noindent It will be useful to define
\[
\eta^{ij} = \langle e^i,e^j\rangle\,.
\]
\noindent Throughout of the rest of this paper we assume:
\begin{assumption}\label{assumption}
There exists a basis element $e^\bullet$ with the property that
\[
\det\left( \eta^{ij} \right) \neq 0 \,.
\]
\end{assumption}
\noindent Under this assumption one can define $\eta_{ij} = \left(\eta^{ij}\right)^{-1}\,.$

\medskip

There are a number of different representation for the Lie algebra $\mathfrak{g}(\mathcal{A})$ that may be constructed from the structure constants of the Novikov algebra, and the interplay between these will be crucial
in the construction of Darboux coordinates.

\begin{prop} Given a basis $\{ e^i\,, i=1\,,\ldots\,,n\}$ for the Novikov algebra $\mathcal{A}$ and structure constants $\Gamma^{ij}_k$, define
matrices $\Lambda^{(i)}\,, \Xi^{(i)} \in gl\left( \mathfrak{g}(\mathcal{A}) \right)$ by:
\begin{eqnarray*}
\left( \Lambda^{(i)} \right)_{rc} & = & \Gamma^{ic}_r\,,\\
\left( \Xi^{(i)} \right)_{rc} & = & - \left( \Gamma^{ir}_c + \Gamma^{ri}_c\right)\,.
\end{eqnarray*}
Then
\begin{eqnarray}
\left[ \Lambda^{(i)},\Lambda^{(j)} \right] & = & c^{ij}_k \Lambda^{(k)}\,,\label{LieAlg1}\\
\left[ \Xi^{(i)},\Xi^{(j)} \right] & = & c^{ij}_k \Xi^{(k)}\label{LieAlg2}\,.
\end{eqnarray}
The Lie algebra $\mathfrak{g}(\mathcal{A})$ is Abelian if and only if $2 {\Lambda^{(i)}}^T + \Xi^{(i)}=0$ for $i=1\,,\ldots\,,n\,.$

\end{prop}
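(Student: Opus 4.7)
The plan is to recognize $\Lambda^{(i)}$ and $\Xi^{(i)}$ as matrices of natural operators on $\mathcal{A}$ and transfer the Lie-algebraic identities from operators to matrices. First, $\Lambda^{(i)}$ is simply the matrix of the left multiplication $L_{e^i}$ in the basis $\{e^j\}$: expanding $L_{e^i}(e^c) = e^i\circ e^c = \Gamma^{ic}_r e^r$ one reads off $(L_{e^i})_{rc} = \Gamma^{ic}_r = (\Lambda^{(i)})_{rc}$. The identity (\ref{LieAlg1}) is then immediate from the operator relation $[L_a, L_b] = L_{[a,b]}$ already recorded in the excerpt, by passing to matrices in the chosen basis.

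For $\Xi^{(i)}$, the same kind of index bookkeeping identifies it with $-(L_{e^i} + R_{e^i})^T$: setting $T_{e^i} := L_{e^i} + R_{e^i}$, one has $T_{e^i}(e^c) = (\Gamma^{ic}_r + \Gamma^{ci}_r)e^r$, so $(T_{e^i})_{rc} = \Gamma^{ic}_r + \Gamma^{ci}_r$ and therefore $(\Xi^{(i)})_{rc} = -(T_{e^i})_{cr}$. Since transposition reverses commutators, the desired identity (\ref{LieAlg2}) reduces to the operator statement that $a \mapsto L_a + R_a$ is itself a representation of $\mathfrak{g}(\mathcal{A})$, i.e.\ $[T_a, T_b] = T_{[a,b]}$.

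The crux is therefore the cross-term identity $[L_a, R_b] - [L_b, R_a] = R_{[a,b]}$, and this is the step I expect to be the main obstacle because both Novikov axioms have to interact. Evaluated on an arbitrary $c \in \mathcal{A}$, axiom (\ref{N2}) converts $(a\circ c)\circ b$ into $(a\circ b)\circ c$ (and the $a\leftrightarrow b$ analogue), while (\ref{N1}), rewritten in the form $a\circ(c\circ b) = c\circ(a\circ b) + [a,c]\circ b$, reduces $a\circ(c\circ b) - b\circ(c\circ a)$ to $c\circ[a,b] + [a,c]\circ b - [b,c]\circ a$. The leftover piece $[a,c]\circ b - [b,c]\circ a - [a,b]\circ c$ then collapses to zero after repeatedly applying (\ref{N2}) to the form $(x\circ y)\circ z = (x\circ z)\circ y$, leaving exactly $R_{[a,b]}c$. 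Combining with $[L_a, L_b] = L_{[a,b]}$ and $[R_a, R_b] = 0$ gives $[T_a, T_b] = T_{[a,b]}$, which transposes back to (\ref{LieAlg2}).

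Finally, for the Abelianness characterization a one-line entry-wise computation suffices:
\[
\bigl(2(\Lambda^{(i)})^T + \Xi^{(i)}\bigr)_{rc} = 2\Gamma^{ir}_c - \Gamma^{ir}_c - \Gamma^{ri}_c = \Gamma^{ir}_c - \Gamma^{ri}_c = c^{ir}_c.
\]
This matrix vanishes for every $i$ iff all Lie structure constants $c^{ab}_k$ vanish, which by the preceding lemma is equivalent to $\mathfrak{g}(\mathcal{A})$ being Abelian (equivalently, $\mathcal{A}$ being commutative and hence associative).
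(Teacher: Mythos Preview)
Your proof is correct and follows the same operator-theoretic route the paper indicates but leaves as ``direct computation'': identify $\Lambda^{(i)}$ and $\Xi^{(i)}$ with the matrices of $L_{e^i}$ and $-(L_{e^i}+R_{e^i})^T$, obtain (\ref{LieAlg1}) from $[L_a,L_b]=L_{[a,b]}$, and reduce (\ref{LieAlg2}) to the cross-term identity $[L_a,R_b]-[L_b,R_a]=R_{[a,b]}$, which you verify using both axioms (\ref{N1}) and (\ref{N2}) exactly as the paper says is required. Your identifications place the transpose on $\Xi$ rather than on $\Lambda$ (the paper records $L_{e^i}={\Lambda^{(i)}}^T$ and $L_{e^i}+R_{e^i}=-\Xi^{(i)}$), but either convention yields the same commutator relations, and the Abelianness criterion you compute entrywise is precisely the paper's observation that $2{\Lambda^{(i)}}^T+\Xi^{(i)}=0$ says $L_{e^i}=R_{e^i}$.
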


\begin{proof} The proof of (\ref{LieAlg1}) is immediate from the definition: in fact it only uses the first condition (\ref{N1}) and is just the proof that a pre-Lie algebra defines a Lie algebra.
The proof of (\ref{LieAlg2}) is more subtle and uses both of the Novikov conditions. The proof itself is straightforward and will be omitted: it is just direct computation. Both conditions are, in fact,
necessary for the proof of (\ref{LieAlg2}): the matrices $\Xi^{(i)}$ defined from the (non-Novikov) pre-Lie algebra $e^1\circ e^1 = 2 e^1\,, e^1 \circ e^2 = e^2\,, e^2 \circ e^2 = e^1\,,e^2 \circ e^1=0$ do not form a
Lie-algebra.

\medskip

Note, the left and right multiplications are related to these matrices by $L_{e^i} = {\Lambda^{(i)}}^T$ and $L_{e^i}+R_{e^i} = - \Xi^{(i)}\,.$ The condition $2 {\Lambda^{(i)}}^T + \Xi^{(i)}=0$ is then just
the statement that left and right multiplications coincide.

\end{proof}

Thus we have two representations, which we denote $\pi_{{}_\Lambda}$ and $\pi_{{}_\Xi}$, of the abstract Lie algebra $\mathfrak{g}(\mathcal{A})\,,$ defined by $\pi_{{}_\Lambda}(e^i)=\Lambda^{(i)}$
and $\pi_{{}_\Xi}(e^i) = \Xi^{(i)}\,.$

\begin{prop} The representation $\pi_{{}_\Xi}$ is faithful.
\end{prop}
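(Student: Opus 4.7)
The plan is to show that the kernel of $\pi_{{}_\Xi}$ is trivial by restricting attention to a single distinguished column of the matrices $\Xi^{(i)}$, namely the column indexed by $\bullet$, and using the nondegeneracy of $\eta^{ij}$ from Assumption \ref{assumption}.

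First I would suppose that $a=a_i e^i$ lies in the kernel, so that $a_i\Xi^{(i)}=0$ as a matrix. Reading off the $(r,\bullet)$-entry gives
\[
0 \;=\; a_i\,\Xi^{(i)}_{r\bullet} \;=\; -a_i\bigl(\Gamma^{ir}_{\bullet}+\Gamma^{ri}_{\bullet}\bigr)
\]
for every index $r$.

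The key step is to recognize the right-hand side as $-a_i\eta^{ir}$. Indeed, by the definition of $\omega$ (picking out the coefficient of $e^\bullet$) and the formula (\ref{cocycle}) for the cocycle, one has
\[
\eta^{ir} \;=\; \langle e^i,e^r\rangle \;=\; \omega\bigl(e^i\circ e^r + e^r\circ e^i\bigr) \;=\; \Gamma^{ir}_{\bullet}+\Gamma^{ri}_{\bullet},
\]
so the $\bullet$-column of $\Xi^{(i)}$ is exactly $-\eta^{ir}$.

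Thus the kernel condition collapses to $a_i\eta^{ir}=0$ for all $r$, and the nondegeneracy $\det(\eta^{ij})\neq 0$ forces $a_i=0$ for each $i$. Hence $a=0$ and $\pi_{{}_\Xi}$ is faithful. I expect no real obstacle here: the whole proof is essentially an observation that the distinguished basis element $e^\bullet$ was chosen precisely so that the cocycle $\eta^{ij}$ appears as a column of $\Xi^{(i)}$, and faithfulness is then an immediate consequence of the standing nondegeneracy hypothesis.
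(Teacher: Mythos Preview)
Your proof is correct and essentially identical to the paper's own argument: both extract the $\bullet$-column of $a_i\Xi^{(i)}$, identify it with $-a_i\eta^{ir}$ via the definition of $\omega$ and the cocycle, and conclude from the nondegeneracy of $\eta$.
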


\begin{proof} By definition,
\begin{eqnarray*}
\eta^{ij} & = & \omega(e^i \circ e^j + e^j \circ e^i)\,,\\
& = & ( \Gamma^{ij}_k + \Gamma^{ji}_k ) \omega(e^k)\,,\\
& = & \Gamma^{ij}_\bullet + \Gamma^{ji}_\bullet \,,\\
& = & - \Xi^{(i)}_{j\bullet}\,.
\end{eqnarray*}
Let ${\bf v}\in \mathfrak{g}(\mathcal{A}).$ Suppose $\pi_{{}_\Xi}(\bf v)=0\,.$ Then, on writing ${\bf v} = v_i e^i\,,$ one obtains $v_i \Xi^{(i)}=0$ and hence $v_i \Xi^{(i)}_{j\bullet}=0\,.$ Hence $v_i \eta^{ij}=0\,.$
By assumption (\ref{assumption}) it follows that $v_i=0$ and hence $\ker \pi_{{}_\Xi}=0\,.$
\end{proof}

The representation $\pi_{{}_\Lambda}$ is not, in general, faithful - see Example (\ref{nonfaithful}). It will be useful, in what follows, to define Lie algebra homomorphism
$\rho: gl\left(\mathfrak{g}(\mathcal{A})\right) \rightarrow gl\left(\mathfrak{g}(\mathcal{A})\right)$ by
\[
\rho\left( \Xi^{(i)} \right) = \Lambda^{(i)}\,.
\]
The interplay between these two representations of the Lie algebra $\mathfrak{g}(\mathcal{A})$ will play a pivotal role in the later constructions.

\medskip

Finally, the results of this paper will rest on the following result, proved in \cite{BK}, which uses the fundamental result of Zelmanov \cite{Z}.

\begin{thm} The Lie algebra $\mathfrak{g}(\mathcal{A})$ is solvable.
\end{thm}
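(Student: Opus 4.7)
The plan is to proceed by induction on $\dim\mathcal{A}$ and to reduce the solvability claim to Zelmanov's deep structural theorem, which asserts that every finite-dimensional simple Novikov algebra over an algebraically closed field of characteristic zero is one-dimensional. The base case $\dim\mathcal{A}=1$ is trivial, since $\mathfrak{g}(\mathcal{A})$ is then one-dimensional and hence abelian.

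For the inductive step with $\dim\mathcal{A}=n>1$, I would split according to whether $\mathcal{A}$ admits a proper nonzero two-sided Novikov ideal $\mathcal{I}$, that is, a subspace satisfying $\mathcal{A}\circ\mathcal{I}\subset\mathcal{I}$ and $\mathcal{I}\circ\mathcal{A}\subset\mathcal{I}$. If such an $\mathcal{I}$ exists, then for any $a\in\mathcal{A}$ and $b\in\mathcal{I}$ the commutator $[a,b]=a\circ b-b\circ a$ lies in $\mathcal{I}$, so $\mathfrak{g}(\mathcal{I})$ is automatically a Lie ideal of $\mathfrak{g}(\mathcal{A})$. The canonical projection $\mathcal{A}\to\mathcal{A}/\mathcal{I}$ is a Novikov homomorphism and induces a Lie algebra isomorphism $\mathfrak{g}(\mathcal{A})/\mathfrak{g}(\mathcal{I})\cong\mathfrak{g}(\mathcal{A}/\mathcal{I})$. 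Both $\mathfrak{g}(\mathcal{I})$ and $\mathfrak{g}(\mathcal{A}/\mathcal{I})$ arise from strictly lower-dimensional Novikov algebras, so by the inductive hypothesis they are solvable. Since the class of solvable Lie algebras is closed under extensions, $\mathfrak{g}(\mathcal{A})$ is itself solvable.

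If instead $\mathcal{A}$ possesses no proper nonzero two-sided Novikov ideal, then either $\mathcal{A}\circ\mathcal{A}=0$, in which case the Lie bracket on $\mathfrak{g}(\mathcal{A})$ vanishes identically and the statement is immediate, or $\mathcal{A}$ is simple in the usual sense. Zelmanov's theorem then forces $\dim\mathcal{A}=1$, contradicting $n>1$, so this subcase does not occur and the induction closes.

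The hard part of the argument lies entirely outside the inductive scaffold: it is Zelmanov's classification of finite-dimensional simple Novikov algebras, which is invoked here as a black box through \cite{BK,Z}. The inductive machinery itself rests only on the elementary facts that solvability of Lie algebras is preserved under extensions, and that two-sided Novikov ideals are automatically Lie ideals of the associated commutator algebra; no further structural input into $\mathcal{A}$ is required beyond that already supplied by Zelmanov.
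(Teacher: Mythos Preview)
Your argument is correct, and it is a genuinely different route from the paper's. Both proofs rest on Zelmanov's theorem, but they organise the reduction differently. The paper isolates the largest right-nilpotent ideal $N(\mathcal{A})$, cites Segal \cite{S} for the solvability of $\mathfrak{g}(N(\mathcal{A}))$, and then applies Zelmanov's structural theorem to conclude that $\mathcal{A}/N(\mathcal{A})$ is a direct sum of fields, whence $\mathfrak{g}(\mathcal{A})/\mathfrak{g}(N(\mathcal{A}))$ is abelian and the extension is solvable. Your induction on dimension bypasses both the nilradical construction and the Segal input: any proper two-sided ideal feeds the inductive hypothesis, and the absence of such an ideal forces simplicity, which Zelmanov (over~$\mathbb{C}$) rules out above dimension one. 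Your approach is more self-contained and uses only the ``simple implies a field'' form of Zelmanov rather than the full structure theorem; the paper's approach, on the other hand, makes the solvable filtration more explicit by exhibiting $N(\mathcal{A})$ as a canonical ideal whose quotient is already abelian, which aligns with the later use of the elementary sequence $\mathfrak{a}_0\supseteq\mathfrak{a}_1\supseteq\cdots$ in Proposition~\ref{propsequence}.
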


\begin{proof} (Sketch, following \cite{BK}). The algebra $\mathcal{A}$ is right nilpotent if $R_A^m=0$ for some $n\geq 1$ (where $R_A=\{R_a\,: a \in \mathcal{A}\}$). Since right multiplications commute,
if $I$ and $J$ are two right nilpotent ideals it follows that their sum $I+J$ is also right nilpotent. Since $\mathcal{A}$ is finite dimensional there is thus a largest right-nilpotent ideal of $\mathcal{A}$, which we
denote $N(\mathcal{A})\,.$ From this it follows that the Lie algebra $\mathfrak{h}$ of $ N(\mathcal{A})$ is solvable \cite{S}. From Zelmanov's structural theorem \cite{Z}, $\mathcal{A}/N(\mathcal{A})$ is a direct sum of fields and
hence the Lie algebra $\mathfrak{g}(\mathcal{A})/\mathfrak{h}$ is Abelian and hence $\mathfrak{g}(\mathcal{A})$ is solvable.
\end{proof}

\section{The Gauss-Manin equations}\label{GMeqn}

Given a flat metric $g$ with Levi-Civita connection ${}^g\nabla$ the flat coordinates (coordinates in which the component of $g$ are all constants) are found by solving the Gauss-Manin equations
\[
{}^g\nabla d{\bf{v}} = {\bf 0}\,.
\]
Expanding this is terms of coordinates gives the system
\[
\frac{\partial^2 {\bf v}}{\partial u^j \partial u^k} - \Gamma^a_{jk}({\bf u}) \frac{\partial {\bf v}}{\partial u^a} = 0 \,,
\]
where $\Gamma^a_{jk}( {\bf u})$ are the Christoffel symbols of the metric $g({\bf u})\,.$ This is an over-determined systems of equations, but it forms a holonomic system due to the flatness of metric $g\,;$ this
geometric condition is precisely the analytic conditions for this over-determined system of equations to have an $n$-parameter solution space.

\medskip

We wish to solve the Gauss-Manin equations for the (flat) metric given in terms of an arbitrary Novikov algebra
\[
g^{ij}( {\bf u}) = \left( \Gamma^{ij}_k + \Gamma^{ji}_k \right) u^k\,.
\]
The corresponding Christoffel symbols are rational functions and it is difficult to say anything much about them in general. However, a more tractable, but equivalent, form may be found by multiplying the equations
by $g^{-1}$ to obtain
\[
g^{ik}({\bf u}) \frac{\partial~}{\partial u^k} \left(\frac{\partial{\bf v}}{\partial u^j}\right) - g^{ik}({\bf u}) \Gamma_{jk}^a({\bf u}) \frac{\partial {\bf v}}{\partial u^a} = 0 \,.
\]
By definition $\Gamma^{ia}_j = - g^{ik}({\bf u}) \Gamma^{a}_{jk}({\bf u})$ and these are, for metrics defined by Novikov algebras, {\sl constants}\,. Thus one obtains the system
\[
L^{(i)}({\bf u}) \frac{ \partial {\bf v} }{\partial u^j} + \Gamma_{j}^{ia} \frac{ \partial {\bf v}}{\partial u^a} = 0
\]
where the vector fields $L^{(i)}\,, i=1\,, \ldots\,, n$ are defined by
\begin{eqnarray*}
L^{(i)}({\bf u}) & = & g^{ij}({\bf u}) \frac{\partial~}{\partial u^j}\,,\\
& = &  \left( \Gamma^{ij}_k + \Gamma^{ji}_k \right) u^k \frac{\partial~}{\partial u^j}\,.
\end{eqnarray*}

\noindent The vector fields $L^{(i)}$ constitute a vector-field representation of the solvable Lie algebra $\mathfrak{g}(\mathcal{A})\,;$ it is easy to show that
\[
\left[ L^{(i)}, L^{(j)} \right ] = c^{ij}_k L^{(k)}
\]
where the bracket is the Lie bracket of vector fields.

\medskip

Thus with the introduction of these vector fields the Gauss-Manin equations becomes the first-order system
\[
L^{(i)} {\boldsymbol\xi} + \Lambda^{(i)} {\boldsymbol\xi} = 0\,,
\]
where
\[
{\boldsymbol \xi}({\bf u}) = \left(
\begin{array}{c}
\frac{\partial {\bf v}}{\partial u^1} \\
\vdots \\
\frac{\partial {\bf v}}{\partial u^n} \\
\end{array}
\right)\,.
\]
From the $n$-solutions ${\boldsymbol\xi}$ the flat coordinates may be found by simple quadrature.

\medskip

Solutions of the Gauss-Manin equations will exhibit branching around the discriminant
\[
\Sigma=\{ {\bf u}\, | \,\Delta({\bf u})=0 \}
\]
where $\Delta({\bf u}) = \det\left(g^{ij}({\bf u})\right),$ and this is encapsulated in the associated monodromy group
\[
W(M) = \mu \left( \pi_1(M\backslash\Sigma) \right)
\]
(see, for example, \cite{D}). Explicitly, the continuation of a solution under a closed path $\gamma$ on $M\backslash\Sigma$ yields a transformation
\[
{\tilde v}^a ({\bf u}) = A^a_b(\gamma) v^b({\bf u}) + B^a(\gamma)
\]
with $A$ orthogonal with respect to the metric $g({\bf u})\,,$ and these generate a subgroup of $O(N)\,.$ Thus to every Novikov algebra $\mathcal{A}$ there is an associated monodromy group which we
denote $W(\mathcal{A})\,.$

\begin{example} Consider the (commutative) Novikov algebra given by the multiplication table

\[
\begin{array}{c|cc}
\circ & e^1 & e^2 \\
\hline
e^1 & e^1 & e^2 \\
e^2 & e^2 & 0
\end{array}
\]
With this the (inverse)-metric is
\[
g^{-1} = 2 u^1 \frac{\partial~}{\partial u^1} \otimes \frac{\partial~}{\partial u^1} + 4 u^2 \frac{\partial~}{\partial u^1} \otimes \frac{\partial~}{\partial u^2}
\]
so $\Sigma=\{u^2=0\}\,.$ Solving the Gauss-Manin equations gives
\[
v^1 = {u^1}{ (u^2)^{-\frac{1}{2}} } \,, \qquad v^2 = 2  (u^2)^{\frac{1}{2}}\,.
\]
So under a closed path around $\Sigma$ given by $u^1\mapsto u^1\,, u^2 \mapsto e^{2\pi\sqrt{-1}} u^2\,,$
\begin{eqnarray*}
v^1 & \mapsto & e^{-\pi\sqrt{-1}} \, v^1\,,\\
v^2 & \mapsto & e^{+\pi\sqrt{-1}} \, v^2\,
\end{eqnarray*}
so
\[
A=\left(
\begin{array}{cc}
\varepsilon^{-1} & 0 \\
0 & \varepsilon \\
\end{array}
\right)\,,
\]
where $\varepsilon=e^{\pi\sqrt{-1}}\,.$ This gives a representation of the monodromy group $\mathbb{Z}_2\,.$ Inverting this transformation gives
\begin{equation}
u^1 =  \displaystyle{\frac{1}{2} v^1 v^2} \,,\qquad
u^2  =  \displaystyle{\frac{1}{4} (v^2)^2}\,,
\label{transformationexample}
\end{equation}
so $u_i \in \mathbb{C}^{\mathbb{Z}_2} [v_1\,,v_2]\,.$ Thus the $u_i$ are invariant polynomials under the action of the monodromy group.

\end{example}

In this example the Lie algebra $\mathfrak{g}(\mathcal{A})$ is Abelian, and the transformation (\ref{transformationexample}) is quadratic. This is true more generally - if $\mathfrak{g}(\mathcal{A})$ is Abelian the
transformation between the ${\bf v}$-coordinates and the ${\bf u}$-coordinates will {\sl always} be quadratic. This was proved by Novikov and Balinskii by direct calculation. Basically, if $\mathfrak{g}(\mathcal{A})$ is Abelian one can
circumvent the solving of the Gauss-Manin equations and the inverting of its solution by constructing a suitable quadratic ansatz and directly verifying that this has the required properties. However, simple examples with
non-Abelian Novikov algebras show that such quadratic ansatz will not hold in general - indeed, one does not know a priori the degrees of the polynomials, or even if the functions are polynomial.

\medskip

Low-dimensional Novikov algebras and corresponding cocycles where classified by Bai and Meng \cite{BM1,BM1b} and provide a rich sources of examples.

\begin{example}\label{2Dexample} Consider the Novikov algebra given by the multiplication table

\[
\begin{array}{c|cc}
\circ & e^1 & e^2 \\
\hline
e^1 & e^1 & \lambda e^2 \\
e^2 & e^2 & 0
\end{array}
\]
\noindent Solving the Gauss-Manin equations and inverting gives

\[
u^1 = \frac{1}{2} v^1 v^2\,, \qquad \qquad u^2 = \frac{1}{4} \left( v^2 \right)^{1+\lambda}\,.
\]
\noindent These functions are only polynomial if $\lambda$ is an integer, in which case they are invariant under the cyclic group generated by transformation $v^1 \mapsto \varepsilon^\lambda v^1\,,v^2 \mapsto \varepsilon v^2$
where $\varepsilon^{\lambda+1}=1$ and hence $u^1\,,u^2 \in \mathbb{C}^{\mathbb{Z}_{\lambda+1}}[v^1,v^2]\,.$

\medskip

If $\lambda=p/q$ is rational, one obtains the monodromy group $\mathbb{Z}_{p+q}[p,q]$ acting on the ${\bf v}$-coordinates via $v^1 \mapsto \varepsilon^p v^1\,,v^2 \mapsto \varepsilon^q v^2$ where $\varepsilon^{p+q}=1\,.$
Note, though, that the functions ${\bf u}={\bf u}({\bf v})$ are not polynomial functions, so a finite monodromy group does not, by itself, imply that these are polynomial functions.

\end{example}

While the aim of this paper is the construction of solution of the Gauss-Manin equations in general, we first recover the Balinskii-Novikov result - this will illustrate the general method that will presented in
Section \ref{MainSection}

\subsection{The Balinskii-Novikov case: Abelian Lie algebras}

If the Novikov algebra $\mathcal{A}$ is commutative the associated Lie algebra $\mathfrak{g}(\mathcal{A})$ is Abelian and hence the vector field $L^{(i)}$ commute. Hence, by Frobenius' Theorem there exists coordinates
$\{w^i\,, i=1\,,\ldots\,, n\}$ with the property that
\begin{equation}
L^{(i)} = \frac{\partial~}{\partial w^i}\,, \qquad i=1\ldots\,, n\,.\label{changeofvariable}
\end{equation}
In these coordinates the Gauss-Manin equation just become the matrix partial differential equations
\[
\frac{\partial {\boldsymbol{\xi}}}{\partial w^i} + \Lambda^{(i)} {\boldsymbol \xi} = 0
\]
which may be trivially solved to yield
\[
{\boldsymbol \xi}({\bf w}) = \left( \prod_{r=1}^n e^{-\Lambda^{(r)} w^r} \right) {\boldsymbol \xi}_0
\]
where ${\boldsymbol \xi}_0$ is a constant vector. Since the Lie algebra is Abelian, the matrices $\Lambda^{(i)}$ all commute so there is no ambiguity or ordering problem in the matrix-exponentials. Thus it is trivial to
solve the Gauss-Manin equations in the wrong coordinate system!

It follows from (\ref{changeofvariable}) that the ${\bf u}$-coordinates and the ${\bf w}$-coordinate systems are related via the differential equation
\begin{eqnarray*}
\frac{\partial u^j}{\partial w^i} &=& 2 \Gamma^{ij}_s u^s\,,\\
&=& - \left[ \Xi^{(i)} \right]_{js} u^s
\end{eqnarray*}
or, as a matrix system,
\[
\frac{\partial {\bf u}}{\partial w^i} = - \Xi^{(i)} {\bf u}\,,
\]
where
\[
{\bf u} = \left(
\begin{array}{c}
u^1 \\
\vdots \\
u^n \\
\end{array}
\right)\,.
\]
Since the algebra $\mathfrak{g}({\mathcal{A}})$ is Abelian, the matrices $\Xi^{(i)}$ commute and hence one obtains
\[
{\bf u}({\bf w}) = \left( \prod_{r=1}^n e^{-\Xi^{(r)} w^r} \right) {\bf u}_0
\]
for some constant vector ${\bf u}_0\,.$ Recall that, since $\mathfrak{g}(\mathcal{A})$ is Abelian, the two sets of matrices are related via $2{\Lambda^{(i)}}^{T}+\Xi^{(i)}=0\,.$

\begin{thm}\label{BNtheorem} Suppose $\mathfrak{g}(\mathcal{A})$ is Abelian. Then:

\begin{itemize}

\item[(a)] In the ${\bf v}$-coordinates,
\[
g^{-1} = \eta^{ij} \frac{\partial~}{\partial v^i} \otimes \frac{\partial~}{\partial v^j}\,,
\]
where $\eta^{ij}$ is the distinguished cocycle defined above;

\item[(b)] The transformation from the $\bf v$ to the $\bf u$ coordinates is given by
\[
u^i = \frac{1}{2} \Gamma^i_{jk} v^j v^k
\]
where $\Gamma^i_{jk} = \eta_{jr} \Gamma^{ir}_k\,.$

\end{itemize}

\end{thm}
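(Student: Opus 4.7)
The plan is to verify part (b) by direct substitution and derive (a) as a corollary: if the proposed quadratic transformation $u^{i} = \tfrac{1}{2}\Gamma^{i}_{jk} v^{j} v^{k}$ sends the inverse metric $g^{-1}(\mathbf{u})$ to the constant form $\eta^{ij}\partial_{v^{i}}\otimes\partial_{v^{j}}$, then automatically the $v^{i}$ are flat coordinates with the claimed metric, and both parts are established in one stroke.

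The first ingredient is the symmetry of $\Gamma^{i}_{jk}$ in $j,k$. Starting from $\Gamma^{i}_{jk}=\eta_{jr}\Gamma^{ir}_{k}$ and raising the index $j$ with $\eta^{js}$ (and similarly $k$ with $\eta^{kt}$), the desired symmetry is equivalent to $\eta^{kt}\Gamma^{is}_{k} = \eta^{js}\Gamma^{it}_{j}$. Applying the cocycle relation $\Gamma^{ab}_{d}\eta^{dc}=\eta^{ad}\Gamma^{cb}_{d}$ to both sides reduces this to
\[
\eta^{id}\bigl(\Gamma^{st}_{d} - \Gamma^{ts}_{d}\bigr) \;=\; \eta^{id}\,c^{st}_{d} \;=\; 0,
\]
which holds precisely because $\mathfrak{g}(\mathcal{A})$ is Abelian. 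This is the single point at which the Abelian assumption enters, and it is essential.

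The main calculation is to verify, with the ansatz in force, the contravariant transformation law
\[
\bigl(\Gamma^{rs}_{k}+\Gamma^{sr}_{k}\bigr) u^{k} \;=\; \frac{\partial u^{r}}{\partial v^{a}}\frac{\partial u^{s}}{\partial v^{b}}\,\eta^{ab}.
\]
Using the symmetry just established, $\partial u^{r}/\partial v^{a} = \Gamma^{r}_{aj}v^{j}$; the cocycle identity collapses $\Gamma^{r}_{ac}\eta^{ab}$ to $\Gamma^{rb}_{c}$, and after stripping the common factor $v^{c}v^{d}$ the required identity becomes the symmetric-tensor relation
\[
\operatorname{sym}_{(cd)}\bigl[\Gamma^{rb}_{c}\,\eta_{bp}\,\Gamma^{sp}_{d}\bigr] \;=\; \tfrac{1}{2}\bigl(\Gamma^{rs}_{m}+\Gamma^{sr}_{m}\bigr)\Gamma^{m}_{cd}.
\]
This is verified by a second application of the cocycle rule together with the Novikov axioms~(\ref{N1})--(\ref{N2}), where the Abelian hypothesis intervenes once more to commute the structure-constant products past one another under the symmetrization. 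Once this identity is in hand, the metric in $\mathbf{v}$-coordinates is the constant $\eta^{ab}$, which proves (a); and (b) is then just the ansatz itself.

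The principal obstacle is this symmetric-tensor identity: it is the Novikov-algebra analogue of the classical Frobenius-algebra identity, and its verification interlocks all three inputs---the cocycle, both Novikov relations, and the Abelian hypothesis. An alternative but essentially equivalent route would be to integrate the explicit matrix exponentials $\mathbf{u}(\mathbf{w}) = e^{-\sum_{r}\Xi^{(r)} w^{r}}\mathbf{u}_{0}$ and $\boldsymbol{\xi}(\mathbf{w}) = e^{-\sum_{r}\Lambda^{(r)} w^{r}}\boldsymbol{\xi}_{0}$ given just above the theorem statement and use $\Xi^{(i)} = -2\bigl(\Lambda^{(i)}\bigr)^{T}$ to eliminate $\mathbf{w}$; the relation that emerges between $\mathbf{u}$ and $\mathbf{v}$ is precisely the claimed quadratic one, and the underlying algebraic input is the same identity.
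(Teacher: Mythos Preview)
Your approach is correct but takes a different route from the paper. You verify the quadratic ansatz by direct substitution into the metric transformation law---this is in fact the original Balinskii--Novikov argument, which the paper mentions just before the theorem but deliberately sets aside. The paper instead passes through the auxiliary $\mathbf{w}$-coordinates: it writes the matrix-exponential formulae for $\mathbf{u}(\mathbf{w})$ and $\partial v^j/\partial u^i(\mathbf{w})$, proves (a) by evaluating the (already constant) metric at the single point $\mathbf{w}=\mathbf{0}$, and proves (b) by showing via the chain rule that $\partial^2 u^i/\partial v^a\partial v^b=\Gamma^i_{ab}$ and integrating with the stated boundary conditions. Your route is more elementary and self-contained, needing only the Frobenius-algebra identities; the paper's route is chosen precisely because it rehearses the $\mathbf{w}$-coordinate and matrix-exponential machinery that later extends to the non-Abelian solvable case, where no quadratic ansatz is available. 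The ``alternative but essentially equivalent route'' you sketch in your last paragraph is, in fact, the paper's actual proof.

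Two small remarks. First, the symmetric-tensor identity you isolate actually holds \emph{without} the $(c,d)$-symmetrization: in a commutative associative algebra with Frobenius form both sides equal the totally symmetric four-point contraction $\eta^{rp}\eta^{sq}\eta^{mn}C_{pqm}C_{ncd}$, where $C_{abc}=\eta_{ai}\eta_{bj}\Gamma^{ij}_c$. Second, your phrasing is slightly inconsistent---you first call the symmetry of $\Gamma^i_{jk}$ the ``single point'' where the Abelian hypothesis enters, and then say it ``intervenes once more'' in the identity. In truth commutativity and associativity (both consequences of the Abelian assumption) are each used, so the second phrasing is the accurate one.
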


\begin{proof} The approach here will be to construct solution to the Gauss-Manin equation in the $\bf w$-coordinates, and use this, together with the transformation from the $\bf w$-coordinates to the $\bf u$-coordinates, to
prove the result. As we are dealing with systems of partial differential equations, boundary conditions need to be fixed: these will be defined at the point ${\bf w}={\bf 0}\,.$

\medskip

We first order and normalize the $\bf v$ coordinates (effectively a choice of the vector ${\boldsymbol \xi}_0$ for each component of $\bf v$), so that
\begin{equation}
\frac{\partial v^j}{\partial u^i} = \left(
\prod_{r=1}^n e^{-\Lambda^{(r)} w^r} \right)_{ij}\,.
\label{vdu}
\end{equation}
We fix boundary conditions so
\[
\left. u^i \right|_{{\bf w}={\bf 0}} = \delta_{i\bullet}\,,
\]
and this in terms fixes ${\bf u}_0$, so $u^i_0 = \delta_{i\bullet}\,.$
With this choice,
\[
\left( \Xi^{(i)} {\bf u}_0 \right)_j = \Xi^{(i)}_{j\bullet} = - g^{ij}_\bullet = -\eta^{ij}
\]
and it then follows that
\begin{equation}
\frac{\partial u^i}{\partial w^j} = \left(
\prod_{r=1}^n e^{-\Xi^{(r)} w^r} \right)_{ir} \eta^{rj}\,.
\label{udw}
\end{equation}
Note: both the matrices (\ref{vdu}) and (\ref{udw}) may be easily inverted.

\medskip

\noindent{\sl Proof of (a)} Since, by construction, the components of
\[
g^{-1}({\bf v}) = \left(g^{ab}({\bf u}) \frac{\partial v^i}{\partial u^a} \frac{\partial v^j}{\partial u^b} \right)\, \frac{\partial~}{\partial v^i} \otimes \frac{\partial~}{\partial v^j}
\]
are constant, it suffices to evaluate them at the specific point ${\bf w}={\bf 0}\,.$ Using the above formulae
\begin{eqnarray*}
g^{ij}({\bf v})\vert_{  {\bf w} = {\bf 0} } & = & g^{ab}_r \delta_{r\bullet} \delta_{ia} \delta_{jb}\,,\\
& = & -\Xi^{(i)}_{j\bullet}\,,\\
& = & \eta^{ij}\,.
\end{eqnarray*}
Hence the result. This proof also works in the general, non-commutative case, on using the analogous formulae in the Propositions \ref{udefprop} and \ref{GMsolution}.

\medskip

\noindent{\sl Proof of (b)} Using the chain rule
\[
\frac{\partial~}{\partial v^i} = \frac{\partial w^a}{\partial v^i} \, \frac{\partial u^j}{\partial w^a} \frac{\partial~}{\partial u^j}\,.
\]
one may show (and again this crucially uses the commutativity of the matrices) that
\[
\frac{\partial^2 u^i}{\partial v^a \partial v^b} = \Gamma^{i}_{ab}\,.
\]
Integrating, and imposing the boundary condition
\[
\left. v^i \right|_{{\bf w}={\bf 0}} = 2 \left. u^i \right|_{{\bf w}={\bf 0}} = \delta_{i\bullet}\,,
\]
which eliminates linear and constant terms, yields the result. The factor $2$ comes from the various normalizations used.

\end{proof}

\bigskip

Note, such a direct computational approach relies on the commutativity of the Novikov algebra. In general (and this will be expanded on in the next section) one just obtains explicit formulae, in terms of
the $\bf w$-coordinates, for the functions ${\bf u}({\bf w})$ and $\frac{\partial {\bf v}}{\partial {\bf u}}({\bf w})$ which cannot be inverted in general.

\medskip

A shorter, alternative proof of this Theorem was presented in \cite{SZ}. This used a lifting procedure to generate the transformation from the the $1$-dimensional transformation $u=\frac{1}{2} v^2\,.$

\medskip

The results in this subsection rely on the existence of the ${\bf w}$-coordinate system whose existence is implied, via the commutativity of the Novikov algebra, by Frobenius' Theorem for commuting vector fields.
Frobenius' Theorem is often heuristically interpreted in terms of \lq straightening out\rq~of vector fields. In the general case, such straightening is not possible as the vector fields $L^{(i)}$ do not commute.
However, from the solvability of the Lie algebra $\mathfrak{g}(\mathcal{A})$ a {\sl partial} straightening may be constructed and the coordinate system in which such a partial straightening takes place may be used
in the same way as the $\bf w$-coordinates were used in this section.

\section{A Frobenius Theorem for solvable vector fields}\label{MainSection}

Recall that the Lie algebra $\mathfrak{g}(\mathcal{A})$ is solvable. Thus by definition the derived, or commutator, series
\begin{eqnarray*}
\mathfrak{g}^{(0)} & = & \mathfrak{g}(\mathcal{A})\,, \\
\mathfrak{g}^{(i)} & = & \left[\mathfrak{g}^{(i-1)},\mathfrak{g}^{(i-1)}\right]\,,
\end{eqnarray*}
forms a decreasing sequence
\[
\mathfrak{g}(\mathcal{A})=\mathfrak{g}^{(0)} \supseteq \mathfrak{g}^{(1)}\supseteq\mathfrak{g}^{(2)} \supseteq \ldots
\]
which terminates: $\mathfrak{g}^{(m)}=0$ for some $m\,.$ The following decomposition of a solvable Lie algebra will be central:

\begin{prop}\label{propsequence} [See, for example, \cite{K}.] An $n$-dimensional Lie algebra $\mathfrak{g}$ is solvable if and only if there exists a sequence of subalgebras
\begin{equation}
\mathfrak{g} = \mathfrak{a}_0 \supseteq \mathfrak{a}_1 \supseteq \ldots \supseteq \mathfrak{a}_n=0
\label{sequence}
\end{equation}
such that, for each $i$, $\mathfrak{a}_{i+1}$ is an ideal in $\mathfrak{a}_i$ and $\dim\left( \mathfrak{a}_i  / \mathfrak{a}_{i+1} \right)=1\,.$
\end{prop}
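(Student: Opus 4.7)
The plan is to treat the two directions separately, with the easy direction first so that the structure of the refinement in the hard direction is motivated.

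For the implication from the existence of the flag to solvability, I would argue by induction on $i$ that $\mathfrak{g}^{(i)} \subseteq \mathfrak{a}_i$. The base case is trivial since $\mathfrak{a}_0 = \mathfrak{g} = \mathfrak{g}^{(0)}$. For the inductive step, the hypothesis $\dim(\mathfrak{a}_i/\mathfrak{a}_{i+1}) = 1$ together with the fact that $\mathfrak{a}_{i+1}$ is an ideal of $\mathfrak{a}_i$ makes the quotient one-dimensional and hence abelian, so $[\mathfrak{a}_i,\mathfrak{a}_i] \subseteq \mathfrak{a}_{i+1}$. Combined with the inductive hypothesis this gives $\mathfrak{g}^{(i+1)} = [\mathfrak{g}^{(i)},\mathfrak{g}^{(i)}] \subseteq [\mathfrak{a}_i,\mathfrak{a}_i] \subseteq \mathfrak{a}_{i+1}$. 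Since $\mathfrak{a}_n = 0$ we conclude $\mathfrak{g}^{(n)} = 0$, i.e.\ $\mathfrak{g}$ is solvable.

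For the converse, I would start from the derived series $\mathfrak{g}^{(0)} \supseteq \mathfrak{g}^{(1)} \supseteq \cdots \supseteq \mathfrak{g}^{(m)} = 0$ (which terminates by solvability) and refine it. The key observation is that each quotient $\mathfrak{g}^{(i)}/\mathfrak{g}^{(i+1)}$ is abelian by construction, and in an abelian Lie algebra every subspace is an ideal. So I would pick a complete flag
\[
\mathfrak{g}^{(i)}/\mathfrak{g}^{(i+1)} = V_0 \supsetneq V_1 \supsetneq \cdots \supsetneq V_{k_i} = 0
\]
of subspaces of codimension one, and pull each $V_j$ back to the subspace $\widetilde V_j \subseteq \mathfrak{g}^{(i)}$ containing $\mathfrak{g}^{(i+1)}$. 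Concatenating these pullbacks over all $i$ produces the desired chain $\mathfrak{a}_0 \supseteq \mathfrak{a}_1 \supseteq \cdots \supseteq \mathfrak{a}_n$ of codimension-one steps.

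The main point requiring verification is that each $\mathfrak{a}_{j+1}$ is actually an ideal in $\mathfrak{a}_j$. Within a single block (i.e.\ both preimages lying between $\mathfrak{g}^{(i+1)}$ and $\mathfrak{g}^{(i)}$) one uses that the images $V_j, V_{j+1}$ in the abelian quotient satisfy $[V_j,V_{j+1}]=0$, so $[\mathfrak{a}_j,\mathfrak{a}_{j+1}] \subseteq \mathfrak{g}^{(i+1)} \subseteq \mathfrak{a}_{j+1}$. At the boundary, where $\mathfrak{a}_j$ is the last pullback inside $\mathfrak{g}^{(i)}$ and $\mathfrak{a}_{j+1} = \mathfrak{g}^{(i+1)}$, one uses that $\mathfrak{g}^{(i+1)}$ is in fact an ideal of $\mathfrak{g}^{(i)}$ (indeed of $\mathfrak{g}$), so certainly an ideal of the intermediate subalgebra $\mathfrak{a}_j$. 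This bookkeeping at the boundaries is the only step that is not immediate, but it is routine once the refinement is set up. Finally, since $\sum_i k_i = \dim\mathfrak{g} = n$, the resulting chain has length $n$, as required.
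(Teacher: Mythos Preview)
Your proof is correct and is the standard textbook argument. Note, however, that the paper does not actually supply a proof of this proposition: it is stated with the attribution ``[See, for example, \cite{K}]'' and left unproved, being a well-known characterization of solvability. What you have written is essentially the argument one finds in Knapp or any comparable reference, so there is nothing to compare against in the paper itself.
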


In terms of the representation $\pi_{{}_\Xi}$ we may change basis so $\mathfrak{a}_{i-1} = \mathbb{C}\Xi^{(i)} \oplus \mathfrak{a}_{i}$, so, as a vector space,
\[
\mathfrak{a}_i = {\rm span}\{ \Xi^{(j)}\,, j=i+1\,,\ldots\,,n\}\,.
\]
Note, for notational convenience we drop the dependence on the representation and use the same symbol $\mathfrak{a}_i$ to denote both $\mathfrak{a}_i$ and $\pi_{{}_\Xi}(\mathfrak{a}_i)\,.$ Thus $\mathfrak{a}_{i-1}$ is a
semi-direct product of $\mathfrak{a}_{i}$ and the one-dimensional Lie algebra $\mathbb{C} \Xi^{(i)}\,.$

\medskip

The sequence (\ref{sequence}), by Lie's Theorem, results in the existence of an invariant flag of subspaces,
\[
V=V_0 \supseteq V_1  \ldots \supseteq V_n=0,.
\]
But this sequence also has a geometric interpretation. Since we have a representation in terms of vector fields, from $[\mathfrak{a}_i, \mathfrak{a}_i] \subset \mathfrak{a}_{i+1}$, the vector fields in $\mathfrak{a}_i$ form
an integrable distribution, and hence define a submanifold. Thus one obtains, analogous to the invariant flag of subspaces, an invariant flag of nested submanifolds.

\medskip

To illustrate the general construction we consider the vector fields from Example \ref{2Dexample},
\begin{eqnarray*}
L^{(1)} & = & 2 u^1 \frac{\partial~}{\partial u^1} + 3 u^2 \frac{\partial~}{\partial u^1}\,, \\
L^{(2)} & = & 3 u^2 \frac{\partial~}{\partial u^1}
\end{eqnarray*}
so $\left[ L^{(1)},L^{(2)} \right] = L^{(2)}\,.$ Any single vector field may be straightened out so we start with the vector field from the subalgebra $\mathfrak{a}_1$ at the {\sl end} of the elementary
sequence (\ref{sequence}). Thus coordinates may be found so that
\[
3 u^2 \frac{\partial~}{\partial u^1} = \frac{\partial~}{\partial w^2}\,.
\]
We now introduce a vector field ${\bf v} = L^{(1)} - \alpha(w^1,w^2) L^{(2)}$ and fix the scalar function $\alpha$ by requiring that $\left[ {\bf v}, L^{(2)}\right]=\left[ {\bf v}, \partial_{w^2} \right]=0\,.$ This
gives $\alpha=-w_2\,.$ Hence from these commuting vector fields one may introduce coordinates so
\begin{eqnarray*}
L^{(1)} & = & \frac{\partial~}{\partial w^1} - w^2 \frac{\partial~}{\partial w^2}\,, \\
L^{(2)} & = & \frac{\partial~}{\partial w^2}
\end{eqnarray*}
and a simple calculation gives $u^1 = w^2 e^{3 w^1}\,, u^2 = e^{3 w^1}\,.$

\medskip

Thus the structure of the elementary sequence gives an ordering which may be used to construct the $\bf w$-coordinate system which partially straightens out the vector fields $L^{(i)}\,.$ This is entirely analogous with
Lie's original integration method of differential equations: the decomposition of the Lie algebra determines the integration scheme.

\medskip

\begin{defn} Recall the decomposition $\mathfrak{a}_{i-1} = \mathbb{C}\Xi^{(i)} \oplus \mathfrak{a}_{i}$. The matrix-valued functions $g_{(i)}$ are defined by:

\begin{equation}
g_{(i)}(w^{i+1}\,,\ldots\,,w^{n}) =
\left\{
\begin{array}{ll}
\displaystyle{ \prod_{   \{  r:  \, \Xi^{(r)} \in \mathfrak{a}_{i}  \}  }^{\longleftarrow} e^{-\Xi^{(r)} w^r}\,,} & i=1\,,\ldots\,,n-1\,, \\
&\\
\displaystyle{\mathbb{I}} & i=n\,;\\
\end{array}
\right.
\label{gdef}
\end{equation}
\noindent The scalar functions $\alpha^{(i)}_r$ are defined as the coefficients in the expansion:
\begin{equation}
\Xi^{(i)} - g_{(i)} \Xi^{(i)} g_{(i)}^{-1} = \sum_{    \{  r:  \, \Xi^{(r)} \in \mathfrak{a}_{i}  \}  } \alpha^{(i)}_r ( w^{i+1}\,, \ldots\,,w^n) \,\Xi^{(r)}\,.
\label{defalpha}
\end{equation}
\end{defn}
\noindent Two remarks are in order. Firstly, the notation $\displaystyle{\prod^{\longleftarrow}}$ denotes the order of the terms in the direction of increasing labels so
\[
g_{(i)} = e^{-\Xi^{(n)} w^{n}} \cdot \ldots \cdot  e^{-\Xi^{(i+1)} w^{i+1}}\,.
\]
Since the $\Xi^{(i)}$ do not commute, specifying the ordering is essential. Secondly, from the extension $\mathfrak{a}_{i-1}=\mathbb{C} \Xi^{(i)} \oplus \mathfrak{a}_{i}$, we are conjugating $\Xi^{(i)}$ by an
element in $e^{\mathfrak{a}_{i}}\,.$ Hence the left-hand side of (\ref{defalpha}) must lie in $\mathfrak{a}_{i}$ (the leading order terms that lie in $\mathbb{C} \Xi^{(i)}$ cancel). Thus the sum on the right-hand side of (\ref{defalpha}) is
over terms in $\mathfrak{a}_{i}$, i.e. $\sum_{r>i}\,.$

\medskip
We begin by constructing the ${\bf w}$-coordinate system in which the vector fields $L^{(i)}$ are partially straightened out.

\begin{prop}\label{udefprop}
Let
\begin{equation}
{\bf u} = \left(
\prod_{r=1\,,\ldots\,,n}^{\longleftarrow} e^{-\Xi^{(r)} w^r}
\right) {\bf u}_0\,.
\label{udef}
\end{equation}
Then, in the ${\bf w}$-coordinates,
\[
L^{(i)} = \frac{\partial~}{\partial w^i} + \sum_{r>i} \alpha^{(i)}_r L^{(r)}\,.
\]
\end{prop}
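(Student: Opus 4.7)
The plan is to verify the stated identity by letting both sides act on the coordinate functions $u^j$ and invoking the fact that ${\bf w}\mapsto {\bf u}({\bf w})$ is a local diffeomorphism (with Jacobian $-\Xi^{(i)}_{j\bullet}=\eta^{ij}$ at ${\bf w}={\bf 0}$, non-singular by Assumption \ref{assumption}), so that agreement of two derivations on every $u^j$ promotes to an equality of vector fields. The only real computational ingredient is the derivative $\partial{\bf u}/\partial w^i$, and this is fixed by the ordering convention in (\ref{udef}) together with the defining identity (\ref{defalpha}).

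For the differentiation, the decisive observation is that the factor $e^{-\Xi^{(i)} w^i}$ in the ordered product (\ref{udef}) sits immediately to the right of the block $g_{(i)}=e^{-\Xi^{(n)}w^n}\cdots e^{-\Xi^{(i+1)}w^{i+1}}$. Therefore $\partial_{w^i}$ inserts $-\Xi^{(i)}$ at exactly that position, which can be pulled across $g_{(i)}$ by conjugation to give
$$
\frac{\partial {\bf u}}{\partial w^i} \;=\; -\, g_{(i)}\, \Xi^{(i)}\, g_{(i)}^{-1}\, {\bf u}.
$$
Applying the defining expansion (\ref{defalpha}) now rewrites this as
$$
\frac{\partial {\bf u}}{\partial w^i} \;=\; -\,\Xi^{(i)}\, {\bf u} \;+\; \sum_{r>i} \alpha^{(i)}_r\, \Xi^{(r)}\, {\bf u}.
$$

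To close the argument, I would use the $L$–$\Xi$ dictionary: directly from the definitions,
$$
L^{(r)} u^j \;=\; g^{rj}({\bf u}) \;=\; (\Gamma^{rj}_k+\Gamma^{jr}_k)\, u^k \;=\; -\,\Xi^{(r)}_{jk}\, u^k,
$$
so on the $n$-tuple ${\bf u}$ the vector field $L^{(r)}$ acts as left multiplication by $-\Xi^{(r)}$. Substituting into the previous display gives, componentwise,
$$
\Bigl(\tfrac{\partial~}{\partial w^i} + \sum_{r>i} \alpha^{(i)}_r L^{(r)}\Bigr) u^j \;=\; L^{(i)} u^j, \qquad j=1,\dots,n,
$$
which is the claimed vector-field identity by the diffeomorphism remark. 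I expect the only real pitfall to be the bookkeeping of orderings in $\prod^{\longleftarrow}$---verifying that the conjugating block is precisely $g_{(i)}$ and not some other partial product, so that (\ref{defalpha}) applies verbatim; once that convention is pinned down, everything else is a one-line substitution.
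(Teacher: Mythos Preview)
Your proof is correct and is essentially the paper's own argument: the paper differentiates (\ref{udef}) to obtain $\partial_{w^i}{\bf u}=-\bigl(g_{(i)}\Xi^{(i)}g_{(i)}^{-1}\bigr){\bf u}$, applies (\ref{defalpha}), and then expands $\partial_{w^i}=\frac{\partial u^j}{\partial w^i}\,\partial_{u^j}$ via the chain rule to recover $L^{(i)}$, which is the same computation you present as ``both sides agree on every $u^j$''. Your explicit remark that agreement on the $u^j$ suffices because ${\bf w}\mapsto{\bf u}$ is a local diffeomorphism is a point the paper leaves implicit.
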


\begin{proof} On differentiating (\ref{udef})\,,
\begin{eqnarray*}
\frac{\partial{\bf u}}{\partial w^i} & = & - \left(\prod_{r=i+1\,,\ldots\,,n}^{\longleftarrow} e^{-\Xi^{(r)} w^r} \right) \, \Xi^{(i)} \,
\left(\prod_{r=1\,,\ldots\,,i}^{\longleftarrow} e^{-\Xi^{(r)} w^r} \right) {\bf u}_0\,,\\
& = & - \left( g_{(i)} \Xi^{(i)} g_{(i)}^{-1} \right) {\bf u}
\end{eqnarray*}
and on using (\ref{defalpha}),
\begin{eqnarray*}
\frac{\partial u^j}{\partial w^i} & = & \left( \sum_{r>i} \alpha^{(i)}_r \Xi^{(r)}_{jk} - \Xi^{(i)}_{jk} \right) u^k \,,\\
& = & -\sum_{r>i} \alpha^{(i)}_r g^{rj} + g^{ij}\,.
\end{eqnarray*}
Hence
\begin{eqnarray*}
\frac{\partial~}{\partial w^i} + \sum_{r>i} \alpha^{(i)}_r L^{(r)} & = & \left(-\sum_{r>i} \alpha^{(i)}_r g^{rj} + g^{ij}\right) \frac{\partial~}{\partial u^j} + \sum_{r>i} \alpha^{(i)}_r L^{(r)}\,, \\
& = & L^{(i)}
\end{eqnarray*}
as required.
\end{proof}

\medskip

Since the transformation between the $L^{(i)}$ and the $\frac{\partial~}{\partial w^i}$ is triangular one may easily invert these equations to get expressions for the $L^{(i)}$ as linear combinations
of the $\frac{\partial~}{\partial w^j}\,,$ but such expressions are not, in general, required. However they may be calculated very easily, as the following examples shows.

\medskip

\begin{example}\label{2stepexample}

Suppose, given the sequence (\ref{sequence}), that $\mathfrak{a}_1$ is Abelian, so $\mathfrak{g}(\mathcal{A})$ is a 2-step solvable Lie algebra.
Since the matrices $\Xi^{(i)}\,,i\geq 2$ commute we may write
\[
g_{(1)} = e^{-\sum_{r\geq 2} \Xi^{(r)} w^r}\,.
\]
To calculate $\alpha^{(1)}_i$ (all others are zero), we write
\[
\Xi^{(1)} - g_{(1)} \Xi^{(1)} g_{(1)}^{-1} = \left[ \Xi^{(1)}, g_{(1)} \right] g_{(1)}^{-1}
\]
and use the formula (which holds since $\left[\Xi^{(1)}, -\right]$ is a derivation on $\mathfrak{a}_1$),
\[
\left[ \Xi^{(1)}, e^A \right] = \int_{s=0}^1 e^{sA} \left[ \Xi^{(1)},A \right] e^{(1-s)A}\,ds\,.
\]
With this
\begin{eqnarray*}
\left[\Xi^{(1)},g_{(1)}\right] & = & -\sum_{i>1} w^i \int_{s=0}^1 e^{-s\sum_a \Xi^{(a)} w^a} \left[ \Xi^{(1)},\Xi^{(i)} \right] e^{-(1-s)\sum_b \Xi^{(b)} w^b}\,,\\
& = & - \sum_{i,r>1} w^i \,c^{1i}_r \int_{s=0}^1 e^{-s\sum_a \Xi^{(a)} w^a} \Xi^{(r)} e^{-(1-s)\sum_b \Xi^{(b)} w^b}\,,\\
& = & - \sum_{i,r>1} w^i \,c^{1i}_r \,\Xi^{(r)}\, g_{(1)}\,.
\end{eqnarray*}
Hence
\[
\alpha^{(1)}_r (w_2\,,\ldots\,,w_{n}) = - \sum_{i>1} w^i c^{1i}_r\,.
\]
Hence, in the partially straightened out coordinates,
\begin{eqnarray*}
L^{(1)} & = & \frac{\partial~}{\partial w^1} - \sum_{i,r>1} w^i c^{1i}_r \frac{\partial~}{\partial w^r}\,,\\
L^{(i)} & = & \frac{\partial~}{\partial w^i}\,, \qquad i>1\,.
\end{eqnarray*}
The 2-dimensional Novikov algebras from Example (\ref{2Dexample}) falls into this class of examples.

\end{example}

\noindent In these coordinates one may solve the Gauss-Manin equations.

\section{Solutions of the Gauss-Manin equations for Novikov algebra}\label{MainSection2}

The only difference in the formula for ${\bf u}({\bf w})$ between the commutative and the general, non-commutative, case is that a precise order of the exponential factors is required. This is also the case for the formula
for ${\boldsymbol \xi}({\bf w})\,.$

\begin{prop}\label{GMsolution}
The function
\[
{\boldsymbol \xi}({\bf w}) = \left( \prod_{r=1\,,\ldots\,, n}^{\longleftarrow} e^{-\Lambda^{(r)} w^r} \right) {\boldsymbol \xi}_0\,,
\]
where ${\boldsymbol \xi}_0$ is a constant vector, satisfies the Gauss-Manin equations

\[
L^{(i)} {\boldsymbol \xi} + \Lambda^{(i)} {\boldsymbol \xi} = {\bf 0}\,, \qquad i=1\,, \ldots \,, n\,.
\]
\end{prop}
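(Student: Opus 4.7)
The plan is to mirror the computation from Proposition~\ref{udefprop}, now applied to the representation $\pi_{{}_\Lambda}$ in place of $\pi_{{}_\Xi}$, and then combine the result with Proposition~\ref{udefprop} itself by a reverse induction on $i$ running from $n$ down to $1$.

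First, differentiating the ordered product defining $\boldsymbol{\xi}$ termwise (exactly as in the proof of Proposition~\ref{udefprop}) yields
\[
\frac{\partial \boldsymbol{\xi}}{\partial w^i} \;=\; -\,h_{(i)}\, \Lambda^{(i)}\, h_{(i)}^{-1}\, \boldsymbol{\xi}\,, \qquad h_{(i)} \,:=\, \prod_{\{r>i\}}^{\longleftarrow} e^{-\Lambda^{(r)} w^r}\,.
\]
The key structural input is then the conjugation identity
\[
\Lambda^{(i)} - h_{(i)}\, \Lambda^{(i)}\, h_{(i)}^{-1} \;=\; \sum_{r>i} \alpha^{(i)}_r\, \Lambda^{(r)}\,,
\]
involving the \emph{same} scalar functions $\alpha^{(i)}_r(w^{i+1},\ldots,w^n)$ as in (\ref{defalpha}). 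This will be justified via the standard formula $e^{A} X e^{-A} = e^{\mathrm{ad}_A}(X)$: the conjugation expands as a series of nested commutators whose combinatorial coefficients are polynomials in the $w$'s and in the structure constants $c^{ij}_k$ of $\mathfrak{g}(\mathcal{A})$. Since the representations $\pi_{{}_\Lambda}$ and $\pi_{{}_\Xi}$ share the same structure constants (by (\ref{LieAlg1}) and (\ref{LieAlg2})), the two expansions carry identical coefficients; equivalently, by faithfulness of $\pi_{{}_\Xi}$ the identity (\ref{defalpha}) lifts to a universal relation in the abstract solvable Lie algebra $\mathfrak{g}(\mathcal{A})$ and pushes forward under the Lie algebra homomorphism $\rho$.

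With this in place, Proposition~\ref{udefprop} gives $L^{(i)} = \frac{\partial}{\partial w^i} + \sum_{r>i}\alpha^{(i)}_r L^{(r)}$, so
\[
L^{(i)}\boldsymbol{\xi} + \Lambda^{(i)}\boldsymbol{\xi} \;=\; \bigl(\Lambda^{(i)} - h_{(i)}\Lambda^{(i)} h_{(i)}^{-1}\bigr)\boldsymbol{\xi} \,+\, \sum_{r>i} \alpha^{(i)}_r L^{(r)}\boldsymbol{\xi} \;=\; \sum_{r>i} \alpha^{(i)}_r \bigl(L^{(r)}\boldsymbol{\xi} + \Lambda^{(r)}\boldsymbol{\xi}\bigr)\,.
\]
A reverse induction on $i$ now closes the argument. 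The base case $i=n$ is immediate, since $h_{(n)} = \mathbb{I}$ and $L^{(n)} = \frac{\partial}{\partial w^n}$, giving $L^{(n)}\boldsymbol{\xi} + \Lambda^{(n)}\boldsymbol{\xi} = -\Lambda^{(n)}\boldsymbol{\xi} + \Lambda^{(n)}\boldsymbol{\xi} = 0$; the inductive step reads off directly from the displayed identity.

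The main obstacle is precisely the justification of universality of the coefficients $\alpha^{(i)}_r$ across the representations $\pi_{{}_\Xi}$ and $\pi_{{}_\Lambda}$; the remaining steps are essentially routine computations in the style of Proposition~\ref{udefprop}. The cleanest formulation is to view (\ref{defalpha}) as a relation in the abstract Lie algebra $\mathfrak{g}(\mathcal{A})$ (using faithfulness of $\pi_{{}_\Xi}$ to pull back), whereupon any relation built from iterated brackets and matrix exponentials is automatically preserved under an arbitrary representation, and in particular under $\pi_{{}_\Lambda}$.
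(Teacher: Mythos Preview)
Your proof is correct and follows essentially the same approach as the paper: both arguments hinge on transferring the conjugation identity (\ref{defalpha}) from the $\Xi$-representation to the $\Lambda$-representation via the Lie algebra homomorphism $\rho$ (using that $e^{A}Be^{-A}$ expands in iterated brackets), and then close by reverse induction up the solvable sequence. The only cosmetic difference is that the paper carries along the partial products $h_{(i)}$ and proves $L^{(j)}h_{(i)}=-\Lambda^{(j)}h_{(i)}$ for all $j>i$ at each step, whereas you work directly with the full product $\boldsymbol{\xi}$ and induct on the single index $i$; the content is the same.
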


\medskip

\begin{proof}

We prove this result by extension, using the decomposition $\mathfrak{a}_{i-1} = \mathbb{C}\Lambda^{(i)} \oplus \mathfrak{a}_{i}$ (this decomposition follows from using the Lie algebra homomorphism $\rho\left( \Xi^{(i)} \right) = \Lambda^{(i)}$), working up the sequence (\ref{sequence}) from the end. Let
\[
h_{(i)} = \prod_{\{r\,: \Lambda^{(r)} \in \mathfrak{a}_i\}}^{\longleftarrow} e^{\Lambda^{(r)} w^r}\,,
\]
so $h_{(i-1)} = h_{(i)} e^{-\Lambda^{(i)} w^i}\,.$

\medskip

We fix $i$ and assume that
\[
L^{(j)} h_{(i)} = - \Lambda^{(j)} h_{(i)}
\]
for each $j>i\,.$This is clearly true at the end of the sequence when $i=n-1\,:$
\begin{eqnarray*}
L^{(n)} h_{(n-1)} & = & \frac{\partial~}{\partial w^{n}} \left(e^{-\Lambda^{(n)} w^{n}}\right)\,, \\
& = & - \Lambda^{(n)} h_{(n-1)}\,.
\end{eqnarray*}
Now consider $L^{(j)} h_{(i-1)}$ for $j > i-1\,.$ There are two cases to consider, $j>i$ and $j=i\,.$

\begin{itemize}

\item[$\bullet$] For $j>i\,,$
\begin{eqnarray*}
L^{(j)} h_{(i-1)} & = & \left( L^{(j)} h_{(i)} \right) e^{-\Lambda^{(i)} w^i} \,, \\
& = & - \Lambda^{(j)} h_{(i-1)}\,.
\end{eqnarray*}
This uses Proposition (\ref{udefprop}): the $L^{(j)}$ only contains derivatives with respect to the variables $w^r$ with $r\geq j\,,$ and $j>i$ by assumption.
\medskip

\item[$\bullet$] We first note that if one applies the Lie algebra homomorphism $\rho$ to (\ref{defalpha}) one obtains
\begin{equation}
\Lambda^{(i)} - h_{(i)} \Lambda^{(i)} h_{(i)}^{-1} = \sum_{r>i} \alpha^{(i)}_r ( w_{i+1}\,, \ldots\,,w_{n}) \Lambda^{(r)}\,.
\label{defalpha2}
\end{equation}
This uses the identity
\[
e^A B e^{-A} = B + [A,B] + \frac{1}{2} [A,[A,B]] + \ldots
\]
so applying any Lie algebra homomorphism $\rho([A,B])=[\rho(A),\rho(B)]$ gives
\[
\rho(e^A B e^{-A}) = e^{\rho(A)} \rho(B) e^{-\rho(A)}\,.
\]
Thus {\sl the same} functions $\alpha^{(i)}_r$ appear in both equation (\ref{defalpha}) and (\ref{defalpha2}).

\medskip

With this, for $j=i\,,$
\begin{eqnarray*}
L^{(i)} h_{(i-1)} & = & \left( \frac{\partial~}{\partial w^i} + \sum_{r>i} \alpha^{(i)}_r L^{(r)} \right) . \left( h_{(i)} e^{-\Lambda^{(i)} w^i} \right)\,,\\
& = & - h_{(i)} \Lambda^{(i)} e^{-\Lambda^{(i)} w^i} + \sum_{r>i} \alpha^{(i)}_r \left( L^{(r)} h_{(i)} \right)  e^{-\Lambda^{(i)} w^i}\,,\\
& = & - \left( h_{(i)} \Lambda^{(i)} h_{(i)}^{-1} +\sum_{r>i} \alpha^{(r)}_i \Lambda^{(r)} \right) h_{(i-1)}\,,\\
& = & - \Lambda^{(i)} h_{(i-1)}
\end{eqnarray*}
on using (\ref{defalpha2}).

\end{itemize}

\noindent Hence one may extend the partial solution $h_{(i)}$ to the partial solution $h_{(i-1)}\,,$ moving up the elementary sequence. Repeating this procedure finally yields the function $h_{(0)}$
which satisfies $L^{(i)} h_{(0)} = - \Lambda^{(i)} h_{(0)}$ for all $i>0$, i.e. for $i=1\,,\dots \,, n\,.$
Finally, $\boldsymbol{\xi} = h_{(0)} {\boldsymbol \xi}_0$ is the required solution, where ${\boldsymbol\xi}_0$ is an arbitrary constant vector.

\end{proof}

\noindent Propositions (\ref{udefprop}) and (\ref{GMsolution}) are the central results of this paper: to find the required generalization of the quadratic transformation given in
Theorem \ref{BNtheorem} one has to invert equations (\ref{udef}) to find ${\bf w}={\bf w}({\bf u})$ then then find
$\partial_j v^i = {\boldsymbol\xi}\left({\bf w}(\bf u)\right)$ and integrate, followed by another inversion to find ${\bf u}={\bf u}( {\bf v})\,.$ A simplification occur if the algebra $\mathcal{A}$ has
certain additional properties but before this we give an example (again, taken from \cite{BM1,BM1b}).

\begin{example}\label{nonfaithful} Consider the Novikov algebra given by the multiplication table

\[
\begin{array}{c|ccc}
\circ & e^1 & e^2 & e^3 \\
\hline
e^1 & 0 & e^2 & 2e^3 \\
e^2 & 0 & e^3 & 0\\
e^3 & 0 & 0 & 0 \\
\end{array}
\]

\noindent (this is an example where the representation $\pi_{{}_\Lambda}$ is not a faithful representation of $\mathfrak{g}(\mathcal{A}):$ in this basis $\Lambda^{(3)}={\bf 0}\,)$.
The above formulae give
\begin{eqnarray*}
u^1 & = & e^{2 w^1} \left( 2 w^3 + (w^2)^2 \right) \,,\\
u^2 & = & 2 e^{2 w^1}\,,\\
u^3 & = & e^{2 w^1}
\end{eqnarray*}
and the matrix
\[
\frac{\partial v^i}{\partial u^j} = \left(
\begin{array}{ccc}
1 & 0 & 0 \\
0 & e^{-w^1} & 0 \\
0 & -w^2 e^{-w^1} & e^{-2 w^1}
\end{array}
\right)_{ji}\,.
\]
Integrating yields
\begin{eqnarray*}
v^1 & = & u^1\,,\\
v^2 & = & u^2 \left(u^3\right)^{-\frac{1}{2}}\,,\\
v^3 & = & \log u^3
\end{eqnarray*}
and hence
\begin{eqnarray*}
u^1 & = & v^1\,,\\
u^2 & = & v^2 e^{\frac{1}{2} v^3} \,,\\
u^3 & = & e^{v^3}\,.
\end{eqnarray*}
Under the group of transformation generated by $w^1 \mapsto w^1 + \pi\sqrt{-1}\,, w^2 \mapsto w^2\,, w^3 \mapsto w^3$ (under which $u^i \mapsto u^i e^{2\pi \sqrt{-1}}$, i.e. a monodromy transformation around the discriminant), the $v^i$ transform as:
\[
\left(\begin{array}{c} v^1 \\ v^2 \\ v^3 \end{array}\right) \mapsto
\left(\begin{array}{ccc} 1 & 0 & 0 \\ 0 & -1 & 0 \\ 0 & 0 & 1\end{array}\right)\cdot\left(\begin{array}{c} v^1 \\ v^2 \\ v^3 \end{array}\right)+
\left(\begin{array}{c} 0 \\ 0 \\ 2 \pi\sqrt{-1} \end{array}\right)\,.
\]
\end{example}

\medskip

\noindent A considerable simplification occurs if the algebra has a right-identity.

\medskip

Examples of Novikov algebras with a right-identity are easy to construct. Recall, given a commutative, associative algebra with product $\cdot$ and derivation $\delta\,,$ that
\[
a \circ b = a \cdot b + a \cdot \delta b
\]
defines a Novikov algebra \cite{GD, BM2}. If $e$ is the identity element of the commutative, associative algebra, then $\delta e=0$ and hence $e$ is a right identity for $\circ\,.$ Also, if $\eta$ is a non-degenerate cocycle
for the commutative, associative algebra (i.e. a Frobenius algebra), then it is also a non-degenerate cococyle for the induced Novikov algebra.

\begin{cor} Suppose $\mathcal{A}$ has a right identity. Then
\begin{equation}\label{vresult}
v^i = \eta^{ij} \left(  \prod^{\longrightarrow}_{r\geq 1} e^{+\Lambda^{(r)} w^r} \right)_{j1}\,.
\end{equation}
\end{cor}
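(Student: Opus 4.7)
The plan is to verify the displayed formula by showing that $\tilde v^{i}:=\eta^{ij}(H^{-1})_{j1}$, where $H=\prod_{r\geq 1}^{\longleftarrow}e^{-\Lambda^{(r)}w^{r}}$, has the same $L^{(k)}$-action as the flat coordinate $v^{i}$ built from the Gauss--Manin solution $\boldsymbol{\xi}^{(i)}=He_{i}$ furnished by Proposition~\ref{GMsolution}. Since the vector fields $\{L^{(k)}\}$ span the tangent space off the discriminant $\Sigma$, this will imply $\tilde v^{i}=v^{i}$ up to integration constants that are matched at $\mathbf{w}=\mathbf{0}$, where $H^{-1}=\mathbb{I}$.

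The right-identity hypothesis, after relabelling so that $e^{1}$ is the identity, gives $\Gamma^{a1}_{k}=\delta^{a}_{k}$, so that the first column of $\Lambda^{(k)}$ is $e_{k}$ --- equivalently $\Lambda^{(k)}e_{1}=e_{k}$. Writing $M:=H^{-1}=\prod_{r\geq 1}^{\longrightarrow}e^{+\Lambda^{(r)}w^{r}}$ and differentiating $HM=\mathbb{I}$ in combination with Proposition~\ref{GMsolution}'s identity $L^{(k)}H=-\Lambda^{(k)}H$ yields $L^{(k)}M=M\Lambda^{(k)}$. Combining these two ingredients gives
\[
L^{(k)}\tilde v^{i}=\eta^{ij}(M\Lambda^{(k)}e_{1})_{j}=\eta^{ij}M_{jk}.
\]
On the other side, the $n$ solutions $\boldsymbol{\xi}^{(i)}=He_{i}$ of Proposition~\ref{GMsolution} integrate to flat coordinates $v^{i}$ with $\partial v^{i}/\partial u^{l}=H_{li}$, and the non-commutative extension of Theorem~\ref{BNtheorem}(a) (flagged by the author in the remark after that proof, since the argument only evaluates $g^{ab}(\mathbf{v})$ at $\mathbf{w}=\mathbf{0}$ and relies on flatness rather than commutativity) gives the matrix identity $H^{T}g^{-1}H=\eta$, equivalently $g^{-1}H=H^{-T}\eta=M^{T}\eta$. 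In components,
\[
L^{(k)}v^{i}=g^{kl}H_{li}=(g^{-1}H)^{ki}=(M^{T}\eta)^{ki}=\eta^{ij}M_{jk},
\]
using the symmetry of $\eta$; this agrees with the preceding computation of $L^{(k)}\tilde v^{i}$.

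The conclusion is that $v^{i}-\tilde v^{i}$ is locally constant, and matching at $\mathbf{w}=\mathbf{0}$ (both sides equal $\eta^{i1}$) yields the stated formula. The main obstacle I anticipate is the clean deployment of the identity $H^{T}g^{-1}H=\eta$ in the non-commutative setting: one must verify that the Theorem~\ref{BNtheorem}(a) argument, which in the commutative case uses constancy of $g^{-1}$ in flat coordinates together with the value at $\mathbf{w}=\mathbf{0}$, carries over once the ordered matrix exponentials of Propositions~\ref{udefprop} and~\ref{GMsolution} replace the unordered (commuting) products of the Balinskii--Novikov proof. Once this carry-over is made explicit, the remainder is a straightforward comparison of two solutions of the same first-order linear system with matching initial data.
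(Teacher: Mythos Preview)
Your argument is correct, but the paper's proof is considerably shorter and uses the right-identity in a different way. Rather than exploiting $\Lambda^{(k)}e_{1}=e_{k}$ and comparing $L^{(k)}$-actions, the paper invokes the identity $u^{1}=\tfrac{1}{2}\eta_{ab}v^{a}v^{b}$ (which follows from $g^{11}=2u^{1}$, a direct consequence of $e^{1}\circ e^{1}=e^{1}$; this is Dubrovin's Exercise~G.1). Differentiating gives $\partial u^{1}/\partial v^{j}=\eta_{ja}v^{a}$. Since Proposition~\ref{GMsolution} says $\partial v^{j}/\partial u^{i}=H_{ij}$, inversion gives $\partial u^{i}/\partial v^{j}=M_{ji}$, and setting $i=1$ yields $\eta_{ja}v^{a}=M_{j1}$ immediately --- no comparison of first-order systems, no appeal to $H^{T}g^{-1}H=\eta$, and no initial-condition matching is needed. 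Your route is more self-contained at the level of the Gauss--Manin system and makes explicit why the first column of $M$ is the relevant one, but at the cost of the extra machinery you flag; the paper's route trades that for a single external input (the $u^{1}$ formula) and is essentially a one-line read-off.
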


\begin{proof} Following \cite{D} (Exercise G.1), $u^1=\frac{1}{2} \eta_{ab} v^a v^b$ (this uses the existence of a right-identity: $e^1 \circ e^1 = e^1$ and hence $g^{11}=2 u^1$). From this
\[
\frac{\partial u^1}{\partial v^j} = \eta_{ja} v^a\,.
\]
But
\[
\left( \prod^{\longleftarrow}_{s=1\,,\ldots\,,n} e^{-\Lambda^{(s)} w^s} \right) \cdot \left( \prod^{\longrightarrow}_{r=1\,,\ldots\,,n} e^{+\Lambda^{(r)} w^r}\right) = \mathbb{I}
\]
and hence
\[
\frac{ \partial u^i}{\partial v^j} = \left( \prod^{\longrightarrow}_{r=1\,,\ldots\,,n} e^{+\Lambda^{(r)} w^r}\right)_{ji}\,.
\]
Combining these gives the results.
\end{proof}

\medskip

Drawing these results together gives:

\begin{theoremA*}
Let $\mathcal{A}$ be a Novikov algebra with a right-identity and satisfying the non-degeneracy conditions in Assumption \ref{assumption}.  The transformation ${\bf u}={\bf u}({\bf v})$ is found by eliminating the ${\bf w}$-variables from the equations
\begin{eqnarray*}
v_i ({\bf w}) & = & \left( \prod^{\longrightarrow} e^{+ \Lambda^{(r)} w^r}\right)_{i1}\,,\\
u^i ({\bf w}) & = & \left( \prod^{\longleftarrow} e^{-\Xi^{(r)} w^r}\right)_{i\bullet}\,.
\end{eqnarray*}
where $v_i=\eta_{ij}v^j\,.$
\end{theoremA*}
Note, since the matrices $\Xi^{(i)}$ and ${\Lambda^{(i)}}$ are upper/lower triangular, the purely algebraic problem of the elimination of the ${\bf w}$-variables is triangular in nature. We now study the process in more detail, and relate the resulting transformations to the monodromy properties of the solutions of the Gauss-Manin equations.

\section{Finite monodromy and polynomial solutions}\label{monodromysection}

By Lie's Theorem the $\Xi^{(i)}$ must be upper triangular, with the matrices becoming increasingly more upper triangular as $i$-increases, until one reaches the end of the sequence. Under some simple assumptions
one may relate this property to the existence of the non-degenerate cocycle.

\begin{prop}\label{assumptions} Suppose that the matrices $\Xi^{(i)}_{rc}$ have non-zero entries on and above the diagonal $c-r=i-1\,,$ and the extra conditions $\Xi^{(i)}_{n+1-i,n}\neq 0\,.$ Then a non-degenerate
cocycle exits with distinguished element $\bullet=n\,.$
\end{prop}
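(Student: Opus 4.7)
The plan is to compute the matrix $\eta^{ij}$ directly from the structural hypotheses and show that it is anti-triangular with non-vanishing anti-diagonal, which immediately gives $\det(\eta^{ij}) \neq 0$.

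I would start from the identity $\eta^{ij} = -\Xi^{(i)}_{j\bullet}$ established in the proof that $\pi_{{}_\Xi}$ is faithful. Setting $\bullet = n$ as in the hypothesis, this reads $\eta^{ij} = -\Xi^{(i)}_{jn}$. The first structural assumption says that $\Xi^{(i)}_{rc}$ is supported on $c - r \geq i - 1$; specialising to $c = n$, $r = j$ this forces $\eta^{ij} = 0$ whenever $n - j < i - 1$, i.e.\ whenever $i + j > n + 1$. Thus the support of $\eta^{ij}$ lies in the region $\{(i,j) : i + j \leq n + 1\}$, so $\eta$ is anti-upper-triangular.

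The extra hypothesis $\Xi^{(i)}_{n+1-i, n} \neq 0$ is precisely the statement that $\eta^{i, n+1-i} \neq 0$ for $i = 1, \ldots, n$, so every entry on the anti-diagonal $i + j = n + 1$ is non-zero. Applying the index permutation $j \mapsto n + 1 - j$ converts $\eta$ into an ordinary upper-triangular matrix whose diagonal entries are exactly these non-zero anti-diagonal entries. Its determinant is therefore, up to a sign coming from the permutation, the product $\prod_{i=1}^n \eta^{i, n+1-i} \neq 0$, and $e^n$ serves as the distinguished element $e^\bullet$ demanded by Assumption \ref{assumption}.

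I do not foresee any genuine obstacle: the argument is a bookkeeping translation from the index inequalities encoded in the Lie-theoretic triangular form of the $\Xi^{(i)}$ (guaranteed by Lie's theorem applied to the solvable $\mathfrak{g}(\mathcal{A})$) into the anti-triangular shape of the symmetric pairing $\eta$. The only consistency check worth noting is that the required symmetry $\eta^{ij} = \eta^{ji}$ is automatic from the definition of the cocycle and is manifestly compatible with the anti-triangular support, since swapping $i$ and $j$ preserves the region $i + j \leq n + 1$ and sends the anti-diagonal to itself.
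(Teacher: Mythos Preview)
Your argument is correct and follows essentially the same route as the paper: both translate the triangular support condition on the $\Xi^{(i)}$ into the anti-triangular shape of the pairing, and use the extra hypothesis to secure non-vanishing anti-diagonal entries. The only cosmetic difference is that the paper phrases the computation in terms of the full metric $g^{ij}({\bf u}) = -\Xi^{(i)}_{jk}u^k$ and thereby also reads off the discriminant $\Delta({\bf u}) = \big(\prod_i \Xi^{(i)}_{n+1-i,n}\big)(u^n)^n$ and $\Sigma = \{u^n = 0\}$ as a by-product, whereas you specialise immediately to the $u^n$-coefficient via $\eta^{ij} = -\Xi^{(i)}_{jn}$.
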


\begin{proof}
From the conditions on $\Xi^{(i)}$, the entries on the bottom $i-1$ rows are zero. Hence $g^{ir}_c=0$ for $r>n+1-i\,,$ or $g^{ij}=0$ for $i+j>n+1\,.$ Similarly, the $i$-th row from the bottom has a single non-zero
entry $\Xi^{(i)}_{n+1-i,n}\,.$ Hence
\[
g^{i,n+1-i} = \left( \Xi^{(i)}_{n+1-i,n}\right) u^n
\]
and so $\Delta({\bf u}) = \left( \prod_{i=1}^n \Xi^{(i)}_{n+1-i,n} \right) (u^n)^n$\,. This is non-zero away from the discriminant $\Sigma=\{ u^n=0 \}\,.$

\medskip

This forces the distinguished element to be $\bullet=n$ (any other would give a degenerate cocycle) and hence the cocycle $\eta^{ij}$ will have non-zero entries on the antidiagonal and zero entries below the anti-diagonal.

\end{proof}

\medskip

For the rest of this section we place various assumptions on the Novikov algebra, namely:

\begin{assumption}\label{assumptions2}
We assume the following:
\begin{itemize}
\item[(i)] the algebra $\mathcal{A}$ has a right-identity;
\item[(ii)] the properties in Proposition \ref{assumptions} hold;
\item[(iii)] the matrices $\Xi^{(1)}$ and $\Lambda^{(1)}$ are purely diagonal.
\end{itemize}

\end{assumption}

\noindent Large numbers of examples - in arbitrary dimensions - may be constructed with these properties.
It is the diagonal matrices $\Xi^{(1)}$ and $\Lambda^{(1)}$ that control the structure of the monodromy group, and conditions on these entries determine
whether or not the monodromy group is finite. Let:
\begin{eqnarray*}
\Lambda^{(1)}_{rc} & = & \lambda_r \delta_{rc}\,,\\
\Xi^{(1)} _{rc}& = & - \mu_r \delta_{rc}
\end{eqnarray*}
(the negative sign is for future notational convenience). These conditions imply that
\begin{eqnarray*}
e^1 \circ e^i & = & \lambda_i e^i\,,\\
e^i \circ e^1 & = & (\mu_i-\lambda_i) e^i
\end{eqnarray*}
with the consistence condition $\mu_1 = 2 \lambda_1\,.$ Thus the first row and first column of the multiplication table for $\circ$ are fully determined. Note:

\medskip

\begin{itemize}
\item[(a)] If $\mathcal{A}$ has a right-identity $e^1$ then $\mu_i=1+\lambda_i$ for all $i\,;$\\
\item[(b)] If $\mathcal{A}$ has an identity $e^1$ (and hence is commutative and associative) then $\lambda_i=1$ and $\mu_i=2$ for all $i\,.$
\end{itemize}

With these conditions, equation (\ref{udef}) may be expanded
\begin{eqnarray*}
{\bf u} & = &  \left( \prod_{r>1}^{\longleftarrow} e^{-\Xi^{(r)} w^r} \right)
\left(
\begin{array}{ccc}
e^{\mu_1 w^1} & \ldots & 0 \\
\vdots & \ddots & \vdots \\
0 & \ldots & e^{\mu_n w^1}
\end{array}
\right)
\left(
\begin{array}{c}
0 \\ \vdots \\ 1
\end{array}
\right)\,,\\
& = & e^{\mu_n w^1} \left( \prod_{r>1}^{\longleftarrow} e^{-\Xi^{(r)} w^r} \right) {\bf u}_0\,.
\end{eqnarray*}
Since the matrices $\Xi^{(r)}$ are, for $r>1$ strictly upper-triangular, the entries above the diagonal are polynomial in the variables $w^2\,,\ldots\,,w^n\,.$ From the triangular structure of the matrices it follow that
\[
u^i({\bf w})  =  e^{\mu_n w^1} f_i(w^{i+1}\,, \ldots \,, w^n)
\]
with the $f_i$ being polynomial functions with $f_n=1\,.$

\medskip

Similarly, equation (\ref{vresult}) may be expanded
\begin{equation}
v^i = \eta^{ik} e^{\lambda_k w^1} \left( \prod^{\longrightarrow}_{s>1} e^{+\Lambda^{(s)} w^s} \right)_{k1}
\label{misc}
\end{equation}
and since the matrices $\Lambda^{(r)}$ are, for $s>1$ strictly upper-triangular, the entries below the diagonal are polynomial in the variables $w^2\,,\ldots\,,w^n\,.$ From the triangular structure of the matrices it follow (where $v_i = \eta_{ij} v^j$) that
\[
v_i({\bf w})  =  e^{\lambda_i w^1} g_i(w^2\,,w^{i+1}\,,\ldots \,, w_i)
\]
with the $g_i$ being polynomial functions with $g_1=1\,.$

From the leading order behaviour one can show (and this uses the $\Xi^{(i)}_{n+1-i,n}\neq 0$ condition) that up to an overall constant
\begin{eqnarray*}
u^n & = & \left(v^n \right)^{\lambda_n+1}\,,\\
u^{n-1} & = & \left(v^n \right)^{\lambda_n-1}\, v^{n-1}\,,\\
\vdots & & \vdots \\
u^i & = & \left(v^n \right)^{\lambda_i}\, v^i + \ldots\,,\\
\vdots & & \vdots \\
u^1 & = & \frac{1}{2} \eta_{ab} v^a v^b\,.
\end{eqnarray*}

\noindent From this structure one can deduce that, under the various assumption, that the $u^i({\bf v})$ are polynomial functions of $v^1\,,\ldots\,, v^{n-1}$. The various powers of $v^n$ depend crucially on the precise values of $\lambda_i\,.$ Note, however, that a necessary condition for polynomial solutions is that the $\lambda_i$ must be integers.

\medskip

To obtain a {\sl finite} monodromy group restrictions on the constants $\lambda_i\,,\mu_n$ have to be imposed. Thus there must exist a smallest integer $N$ such that
\[
N.\frac{\lambda_i}{\mu_n} = k_i \in \mathbb{Z}\,.
\]
With this the monodromy group $\mathcal{W}(\mathcal{A})$ is isomorphic to the cyclic group $\mathbb{Z}_N\,,$ generated by the diagonal matric $A^i_j={\rm diag}(\varepsilon^{k_1}\,, \ldots\,, \varepsilon^{k_n})\,,$
where $\varepsilon=e^{2 \pi\sqrt{-1}/N}$ is an $N$-the root of unity.

\medskip

Bring these results together gives the follow:

\begin{theoremB*}
Suppose that $\lambda_i\in\mathbb{Z}$ for all $i=1\,,\ldots\,,n\,.$ With the conditions on the Novikov algebra contained in Assumption \ref{assumptions2}, the functions $u^i({\bf v})$ are invariant under the monodromy group and
\[
u^i({\bf v}) \in \mathbb{C}^{\mathcal{W}(\mathcal{A})} [ v^1\,,\ldots\,,v^{n-1}, v^n\,,\left(v^n\right)^{-1}]
\]
where the monodromy group is the cyclic group $\mathcal{W}(\mathcal{A})\cong \mathbb{Z}_{\lambda_n+1} [ 1=\lambda_1\,,\ldots\,,\lambda_n]$  which act on the $v_i$ variables by
\[
v_i({\bf w}) \mapsto \varepsilon^{\lambda_i} v_i({\bf w})\,.
\]
where $\varepsilon=e^{\frac{2\pi\sqrt{-1}}{\mu_n}}$ and $\mu_n=\lambda_n+1\,.$
The constants $\lambda_i$ satisfy the monodromy constraint $\lambda_i+\lambda_j=\lambda_n+1$ if $\eta_{ij} \neq 0\,.$

\medskip

In terms of the ${\bf w}$-variables the monodromy group acts as a translation in a single variable
\begin{equation}
\begin{array}{cclc}
w^1 & \mapsto & w^1 + \frac{2 \pi \sqrt{-1}}{\mu_n}\,,&\\
w^i & \mapsto & w^i\,, &\qquad\qquad i=2\,,\ldots\,,n\,.
\end{array}
\end{equation}

\end{theoremB*}

\begin{proof} The invariance of the $u^i$ is immediate. The transformation properties of the $v^i$ follow from the above formulae and the monodromy relation $A^a_b \eta^{bc} A^d_c=\eta^{ad}\,$ gives the result that
$\lambda_i+\lambda_j=\lambda_n+1$ if $\eta_{ij} \neq 0\,.$

\end{proof}

\medskip

The fact that the monodromy is generated by a translation in a {\sl single} variable in the ${\bf w}$-coordinates follows from the fact that the matrices $\Xi^{(i)}$ and $\Lambda^{(i)}$ are, for $i>1\,,$ strictly upper/lower
triangular and hence are nilpotent. Thus their matrix-exponentials results in polynomial functions in the $w^2\,,\ldots\,,w^n$ variables.
It is therefore only $\Xi^{(1)}$ and $\Lambda^{(1)}$ that result in genuine exponentials and these exponentials are invariant under a complex translation: it is these translations that generated the monodromy group.
In the above these matrices $\Xi^{(1)}$ and $\Lambda^{(1)}$ are, by assumption, purely diagonal, but one could easily extend the theory to the case where there are Jordan blocks.

\medskip

One could, in principle, track the dependence of the $u^i({\bf v})$ functions on the $(v^n)^{-1}$-variable, thus deriving conditions under which the functions $u^i({\bf v})$ are polynomial. However as there
is no full classification of Novikov algebras one would obtain conditions that could not be used in any meaningful way. What examples do show is that very subtle cancellations do occur, eliminating this rational dependence
on the $v^n$-variable. Studying such examples suggest the following:

\medskip

\begin{conjecture}\label{Conjecture}
Suppose that $\lambda_i\in\mathbb{N}_{>0}$ for all $i=1\,,\ldots\,,n\,.$ With the conditions on the Novikov algebra contained in Assumption \ref{assumptions2}, the functions $u^i({\bf v})$ are polynomial and are invariant
under the monodromy group:
\[
u^i({\bf v}) \in \mathbb{C}^{\mathcal{W}(\mathcal{A})} [ v^1\,,\ldots\,,v^n]\,.
\]
\end{conjecture}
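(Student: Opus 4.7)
The plan is to upgrade Theorem B from Laurent to genuine polynomials by ruling out negative powers of $v^n$ in $u^i(\mathbf{v})$. The main tool is the $\mathbb{C}^*$-action obtained by translating the Frobenius coordinate, $w^1 \mapsto w^1 + \log t$: under it $v^j$ carries weight $\mu_n - \lambda_j$ and $u^i$ carries weight $\mu_n$. The hypothesis $\lambda_i \in \mathbb{N}_{>0}$ together with the monodromy constraint $\lambda_i + \lambda_j = \mu_n$ (whenever $\eta_{ij} \neq 0$) forces every weight to lie in $\{1,\ldots,\mu_n-1\}$, so in particular $v^n$ is the unique weight-one generator.

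Concretely, I would pass to the ring of $\mathbb{C}^*$-invariants by setting $\xi^j := v^j/(v^n)^{\mu_n-\lambda_j}$ for $j<n$ and $\zeta^i := u^i/(v^n)^{\mu_n}$. Using the explicit parametrisation from Theorem A, each $\xi^j$ and each $\zeta^i$ is, up to a non-zero constant, a polynomial in $w^2,\ldots,w^n$ --- respectively the factors $\widetilde g_j$ and $f_i$ obtained after stripping off the exponential-in-$w^1$ prefactors. The conjecture is then equivalent to showing that each $\zeta^i$ is a polynomial $P_i(\xi^1,\ldots,\xi^{n-1})$ involving only monomials $\prod_j (\xi^j)^{\alpha_j}$ with weighted degree $\sum_j \alpha_j(\mu_n-\lambda_j) \leq \mu_n$; granted this, back-substitution expresses $u^i$ as a sum of monomials in the $v^j$'s with non-negative exponent of $v^n$, which is exactly polynomiality in $\mathbb{C}[v^1,\ldots,v^n]$.

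The proof then divides into two steps. The first --- that $\zeta^i$ is some polynomial in the $\xi^j$'s --- should follow from the triangular structure of Proposition \ref{udefprop} by recursive inversion of the system $\xi^j = \widetilde g_j(w^2,\ldots,w^j)$, provided the relevant leading coefficients are non-zero, as is implicit in Assumption \ref{assumptions2}. The second step, which I expect to be the main obstacle, is proving the weighted degree bound on $P_i$; a priori the triangular inversion can introduce arbitrarily high powers of $\xi^j$, so the required cancellations must reflect the precise interplay between the matrices $\Xi^{(r)}$ controlling $u^i(\mathbf{w})$ and the matrices $\Lambda^{(r)}$ controlling $v^j(\mathbf{w})$. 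I would attempt this either by induction along the derived series of $\mathfrak{g}(\mathcal{A})$, using the stepwise construction of Section \ref{MainSection2} and exploiting the Lie-algebra homomorphism $\rho(\Xi^{(i)}) = \Lambda^{(i)}$, or via an invariant-theoretic identification of the $u^i$ as a specific graded subspace of $\mathbb{C}[v^1,\ldots,v^n]^{\mathcal{W}(\mathcal{A})}$ whose Hilbert series matches the weighted count of admissible monomials.
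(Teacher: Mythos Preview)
This statement is a \emph{conjecture} in the paper, not a theorem: the paper does not offer a proof of it in the generality stated. What the paper does do is verify the conjecture for the specific $n$-dimensional family $e^i\circ e^j=j\,e^{i+j-1}$ in Section~\ref{quotientsection}, and for a further class in the two-step solvable Example there. Its method for the main family is quite different from yours: rather than working with weighted degrees in the $\mathbf v$-variables, the paper observes that $L^{(2)}u^i=-(i+2)u^{i+1}$, so all the $u^i(\mathbf v)$ are generated from the known quadratic $u^1=\tfrac12\eta_{ab}v^av^b$ by repeated application of $L^{(2)}$; it then shows, by the direct computation $\nabla^i\nabla^j\nabla^k L^{(2)}_p=0$, that the coefficients of $L^{(2)}$ in the $\mathbf v$-coordinates are purely quadratic, whence $\deg u^i=i+1$. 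This argument uses the very special form of the structure constants of that family and does not extend to the general case.

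Your proposal is a reasonable reduction --- the homogeneity weights are correctly identified, and the reformulation in terms of the weighted degree of $P_i(\xi^1,\ldots,\xi^{n-1})$ being at most $\mu_n$ is exactly the right restatement of polynomiality. But you have correctly diagnosed, and then left open, the actual content of the conjecture: the ``second step'' degree bound. Neither of your suggested attacks (induction along the derived series, or matching Hilbert series) is carried out, and there is no indication in the paper or in your sketch of why the required cancellations between the $\Xi^{(r)}$- and $\Lambda^{(r)}$-data should occur in general. In short, your proposal is an outline of where the difficulty lies rather than a proof, which is consistent with the paper's own position that this remains a conjecture.
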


\noindent This conjecture is true for a wide class of examples, as will be proved in the next section for a class of Novikov algebras which have the monodromy group
$\mathcal{W}({\mathcal{A}}) = \mathbb{Z}_{1+\lambda_n}[1=\lambda_1\,,\ldots\,,\lambda_n]\,.$

\medskip

\section{Cyclic quotient singularities and orbit spaces}\label{quotientsection}

In this section we concentrate on an $n$-dimensional example of a Novikov algebra, and use this to illustrate some of the ideas described above. In particular the relationship between the (finite) monodromy group and
solutions of the Gauss-Manin equations (and their inverses) can be seen very explicitly. The ideas rest heavily on the allied construction of Coxeter group orbits spaces in \cite{D}.

\medskip

Consider the commutative, associative algebra $\mathcal{A} \cong \mathbb{C}[z]/ \langle z^n \rangle\,.$ With the basis $e^i = z^{i-1}\,,i=1\,,\ldots\,, n$ one obtains the commutative, associative algebra
\[
e^i \cdot e^j = e^{i+j-1}
\]
(where we assume $e^j=0$ for $j>n$) with unity $e^1\,.$ The compatible inner product $\langle e^i,e^j \rangle = \delta_{i+j,n+1}$ makes $\{ \mathcal{A}\,,\cdot\,,\langle -,- \rangle \}$ into a Frobenius algebra.
The derivations ${der}(\mathcal{A})$ are easy to compute and one such element is $\partial e^i = (i-1) e^i\,.$ Following \cite{BM2} we may use this to define an $n$-dimensional Novikov algebra
\begin{eqnarray*}
e^i \circ e^j & = & e^i \cdot e^j + e^i \cdot \partial e^j\,,\\
& = & j e^{i+j-1}
\end{eqnarray*}
with associated Lie algebra $[e^i,e^j] = (j-i) e^{i+j-1}\,.$ Thus we have a flat metric
\[
g^{-1} = \sum_{i,j=1}^n (i+j) u^{i+j-1} \frac{\partial~}{\partial u^i} \otimes \frac{\partial~}{\partial u^j}\,.
\]
with the problem of finding the transformation that reduces this to a constant form.

\medskip

Drawing together various remarks from above, this Novikov algebra has a right identity $e^i$ and cocycle $\langle -,- \rangle\,.$ The matrices $\{ \Xi^{(i)}\,, \Lambda^{(i)} \}$ may easily be constructed from the
structure constants of the algebra and these only have non-zero entries on a single diagonal line, so the conditions in Assumption \ref{assumptions2} hold and hence Theorems A and B may be used.
\medskip

Even with these explicit structure constants it would be hard to eliminate the ${\bf w}$-variables from the two sets of the
transformations in Theorem A. However, it turns out that one by-pass this and prove, for
this example, that the $u^i$ are polynomials of degree $(i+1)$ is {\sl all} variables (i.e. not just polynomial in the variables $v^1\,, \ldots\,, v^{n-1}$), but this is at the loss of an explicit formulae for them. The
approached is based on the observation that
\[
L^{(2)} u^i = - (i+2) u^{i+1}
\]
and since we have $u^1 = \frac{1}{2} \eta_{ab} v^a v^b$ one may recursively generate all of the $u^i({\bf v})$ if one can construct the vector field $L^{(2)}$ in the $v$-variables. However, this is just as hard
as eliminating the ${\bf w}$-variables from the
two sets of transformations. However one can show that the coefficients of this vector field - when written in the ${\bf v}$-variables, are quadratic. Thus the application of this vector field results in polynomial
functions $u^i({\bf v})$ of degree $(i+1)$ in the ${\bf v}$-variables.

\medskip

To show this quadratic property we calculate the third derivatives of the coefficients and show these are all zero. This is easily done by calculating the covariant derivatives with respect to the ${\bf u}$-variables, i.e.
\[
\nabla^i \nabla^j \nabla^k L^{(2)}_p
\]
where $\nabla^i = g^{ir} (\,{}^g\nabla_r~)\,.$ Since $L^{(2)}=g^{2r}\frac{\partial~}{\partial u^r}$ one has $L^{(2)}_p=\delta^2_p\,.$ A simple calculation yields
\[
\nabla^k L^{(2)}_p = 2 \delta_{k+1,p}\,, \qquad \nabla^j \nabla^k L^{(2)}_p = 2 \delta_{j+k,p}\,, \qquad \nabla^i \nabla^j \nabla^k L^{(2)}_p=0\,.
\]
Hence in the ${\bf v}$-variables the coefficients $L^{(2)}$ are, at most, quadratic functions. From the relation $[L^{(1)},L^{(2)}] = L^{(2)}$ it follows, since $L^{(1)} = \sum v^i \frac{\partial~}{\partial v^i}$,
that the coefficients are
homogeneous of degree two in the ${\bf v}$-variables and hence are purely quadratic functions. Thus the functions $u^i({\bf v})$ are polynomials of degree $(i+1)$ which are also invariant under the action of the
monodromy group, which in this case is
\begin{equation}
v^i \mapsto \varepsilon^{n+1-i} v^i\,, \qquad i=1\,, \ldots\,, n\,,
\label{action}
\end{equation}
where $\varepsilon$ is the primitive $(n+1)$-th root of unity, $\varepsilon=e^{\frac{2 \pi \sqrt{-1}}{n+1}}\,.$ The monodromy group in this case is the cyclic group $\mathbb{Z}_{n+1}(1\,,2\,,,\ldots\,,n)$ and the
resulting orbit space has a
cyclic quotient singularity. Thus Conjecture \ref{Conjecture} is true for this $n$-dimensional class of examples.

\begin{example}\label{nequalsfour} Consider the case $n=4$. The invariant polynomials are:
\begin{eqnarray*}
u^1 & = & \frac{1}{5} (v^1 v^4 + v^2 v^3)\,,\\
u^2 & = & v^2 \left( v^4\right)^2 + \frac{1}{5} (v^3)^2 v^4\,,\\
u^3 & = & (v^4)^3 v^3\,,\\
u^4 & = & (v^4)^5\,.
\end{eqnarray*}
\end{example}

\medskip

Before explaining some of the geometric structures behind this construction we note that this idea may to applied to general classes of Novikov algebra.

\begin{example}
Consider a Novikov algebra with a two-step solvable Lie algebra $\mathfrak{g}(\mathcal{A})\,.$ From Example \ref{2stepexample} and equation (\ref{misc}) it follows that
\begin{eqnarray*}
L^{(2)} & = & \frac{\partial~}{\partial w^2}\,,\\
& = & \frac{\partial v^i}{\partial w^2} \frac{\partial~}{\partial v^i}\,,\\
& = & \sum_{i,j,k,r} \eta^{ij} \Lambda^{(2)}_{jr} \eta_{rk} \left( v^n\right)^{\lambda_j-\lambda_r} v^k \frac{\partial~}{\partial v^i}\,.
\end{eqnarray*}
Hence, if ${\rm rank}\,\Xi^{(2)} = n-1$ one may generate all of the $u^i({\bf v})$ from $u^1({\bf v})$ by repeated action of $L^{(2)}\,.$ Since $\Lambda^{(2)}$ is strictly lower-triangular it is non-zero only for $j>r\,.$ Thus if the $\lambda_i$ are an increasing sequence of integers it follows that the $u^i({\bf v})$ are polynomial functions.
\end{example}

To understand the geometry we first introduce some notation (following \cite{reid}). Let $k[V]$ be the coordinate ring of $V\cong \mathbb{C}^n$ (the space with coordinates ${\bf v}$). The points in the orbit space (with
coordinates ${\bf u}$)
\[
U = V/ W(\mathcal{A})
\]
correspond to the orbits of the group action. The polynomial functions $u^i({\bf v})$ are $W(\mathcal{A})$-invariant functions, i.e. they belong to the ring $k[V]^{W(\mathcal{A})}\,.$ This ring is well-studied
\[
k[V]^{W(\mathcal{A})} = k[u^1\,, \ldots \,, u^N]/J
\]
so $U$ is a subring of $\mathbb{C}^N$ defined by the ideal $J\,.$ Here we see the difference between this orbit space a Coxeter-group oribit space - by Chevalley's Theorem the later is freely generated by the invariant
polynomial.

\begin{example} Consider a monomial $u=\prod_{i=1}^n v_i^{\alpha_i}$ invariant under the action (\ref{action}). This implies that the $\alpha_i$ have to satisfy the constraint
\begin{equation}
\sum_{i=1}^n (n+1-i) \alpha_i = d (n+1)
\label{constraint}
\end{equation}
for some $d \in \mathbb{N}\,.$ Introducing the basic invariant monomials
\begin{eqnarray*}
u_i & = & v_i v_n^i\,, \quad i = 1 \,, \ldots \,, n-1\,,\\
u_n & = & v_n^{n+1}
\end{eqnarray*}
one may write
\[
u=\left(\prod_{i=1}^{n-1} u_i^{\alpha_i} \right) u_n^{d-\sum_{i=1}^{n-1} \alpha_i}\,.
\]
This result shows that any invariant monomial (and hence invariant polynomial) lies in the ring $\mathbb{C}[u_1\,, \ldots \,, u_n\,, 1/u_n]\,.$

\end{example}

The question of precisely which invariant polynomials make up the function $u^i({\bf v})$ is not answered in this purely algebraic approach. We note, however, that the basic building blocks (so, in example (\ref{nequalsfour}), the monomials
$v^1 v^4\,,v^2 v^3\,,\ldots\,,(v^4)^5$) all have $d=1$ in the constraint (\ref{constraint}) and are, in this sense, the simplest invariant monomials. This observation does not determine the functions $u^i$ - there are more
invariant monomials with $d=1$ than $n\,,$ the dimension of the various spaces.

This description of the spaces $U$ and $V$ has been with reference to the $\bf w$-coordinates that were central to the constructions in Sections \ref{MainSection} and \ref{MainSection2}. What is interesting is that the monodromy group acts,
in the ${\bf w}$-variables, as a simple
affine translation in the {\sl single} variable $w^1$ (this may be traced back to the space of diagonal matrices in $\mathfrak{g}(\mathcal{A})$ is 1-dimensional in this class of examples). Recall, that in these
variables the action is
\[
T\,:\, w^1 \mapsto w^1 + \frac{2\pi \sqrt{-1}}{n+1}\,, \qquad w^i \mapsto w^i\,, \quad i \neq 1\,.
\]
Thus
\[
W \cong \mathbb{C}^{n-1} \times \mathbb{C}/T\,.
\]
Note, as $w^1 \rightarrow \infty\,,$ both ${\bf u}\rightarrow {\bf 0}$ and ${\bf v}\rightarrow {\bf 0}$ so the singular point/discriminant has been taken off to infinity in the ${\bf w}$-picture.

\section{Novikov structures on the cotangent bundle}\label{cotangent}

Under the above generic assumptions, one may generate a pencil of (inverse) flat metrics by applying the transformation $u^\bullet \mapsto u^\bullet + \lambda$ for an arbitrary constant $\lambda\,.$ When applied to the original metric
\begin{equation}
g^{ij}({\bf u}) \mapsto g^{ij}({\bf u}) + \lambda \eta^{ij}
\label{metricpencil}
\end{equation}
and hence, by the Dubrovin-Novikov Theorem, one obtains a biHamiltonian structure. The path from biHamiltonian structures to multiplications on the cotangent bundle has been studied by
many authors \cite{Dpencil,F,M} and here we follow the notation and approach of \cite{DS} which stressed the algebraic structures, and in particular, the Novikov structures, that
appear on the cotangent bundle. With ${}^g\nabla$ and ${}^\eta\nabla$ denoting the Levi-Civita connections of the two flat metrics in (\ref{metricpencil}) one may define a tensorial multiplication
on 1-forms:
\begin{equation}
\alpha \circ \beta = {}^g\nabla_{g^\star\alpha}(\beta) - {}^\eta\nabla_{g^\star \alpha}(\beta)\,, \qquad \alpha\,,\beta \in T^\star{\mathcal M}
\label{diffconnection}
\end{equation}
(this resting on the basic result that the difference of two connections is a tensor). Drawing together various results in \cite{Dpencil,M,DS} gives:

\begin{prop}
The multiplication $\circ: T^\star{\mathcal M} \times T^\star{\mathcal M} \rightarrow T^\star{\mathcal M}$ has the following properties for all $\alpha\,,\beta\,,\gamma\in T^\star{\mathcal{M}}\,:$
\begin{eqnarray*}
(\alpha\circ \beta)\circ \gamma & = & (\alpha \circ \beta) \circ \gamma,\,,\\
\alpha \circ (\beta \circ \gamma) - \beta \circ (\alpha \circ \gamma) & = & (\alpha\circ\beta-\beta\circ\alpha)\circ\gamma\,,\\
g^\star( \alpha \circ \beta, \gamma) & = & g^\star(\alpha, \gamma\circ\beta)\,,\\
\eta^\star( \alpha \circ \beta, \gamma) & = & \eta^\star(\alpha, \gamma\circ\beta)\,.
\end{eqnarray*}
These define a Novikov multiplication on $T^\star{\mathcal{M}}$ compatible with the (two) metrics $g$ and $\eta\,.$
\end{prop}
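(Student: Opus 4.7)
The plan is to work in the ${\bf u}$-coordinates: since the right-hand side of (\ref{diffconnection}) is the difference of two connections, $\circ$ is tensorial and it suffices to verify the four identities in any fixed frame. The virtue of the ${\bf u}$-coordinates is that the shift $u^\bullet \mapsto u^\bullet + \lambda$ producing the pencil (\ref{metricpencil}) shows $\eta^{ij}$ to have constant components there, so ${\bf u}$ is a flat coordinate system for $\eta$ and the Christoffel symbols of ${}^\eta\nabla$ vanish identically; meanwhile the Christoffel symbols of ${}^g\nabla$ are tied to the Novikov structure constants by the Dubrovin--Novikov identity $\Gamma^{ij}_k = - g^{ir}({\bf u})\Gamma^j_{rk}({\bf u})$.

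First I would compute the multiplication itself. Applied to the coordinate coframe $\{du^i\}$ of $T^\star\mathcal{M}$, a short calculation using the two facts above gives
\[
du^i \circ du^j = \Gamma^{ij}_k \, du^k\,,
\]
so the tensor multiplication on the cotangent bundle is exactly the Novikov product of $\mathcal{A}$ transported via the identification $e^i \leftrightarrow du^i$. Once this identification is available, the first two assertions of the proposition are nothing but the Novikov axioms (\ref{N2}) and (\ref{N1}) respectively, and the $\eta^\star$-compatibility $\eta^\star(\alpha\circ\beta,\gamma) = \eta^\star(\alpha,\gamma\circ\beta)$ is the cocycle identity (\ref{cocycle}) read off in coordinates.

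The one relation that requires genuine work is the $g^\star$-compatibility. Expanding it componentwise, substituting $g^{cd}({\bf u}) = (\Gamma^{cd}_p + \Gamma^{dc}_p)u^p$ and equating coefficients of each $u^p$, reduces it to the purely algebraic identity
\[
(\Gamma^{ck}_p + \Gamma^{kc}_p)\,\Gamma^{ij}_c = (\Gamma^{id}_p + \Gamma^{di}_p)\,\Gamma^{kj}_d\,,
\]
equivalently, in $\mathcal{A}$,
\[
(e^i\circ e^j)\circ e^k + e^k\circ(e^i\circ e^j) = (e^k\circ e^j)\circ e^i + e^i\circ(e^k\circ e^j)\,.
\]
This is where I expect the main obstacle, but it should be handled by applying (\ref{N2}) twice to rewrite $(e^i\circ e^j)\circ e^k = (e^i\circ e^k)\circ e^j$ and $(e^k\circ e^j)\circ e^i = (e^k\circ e^i)\circ e^j$; the claim then collapses to $(e^i\circ e^k - e^k\circ e^i)\circ e^j = e^i\circ(e^k\circ e^j) - e^k\circ(e^i\circ e^j)$, which is precisely (\ref{N1}). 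Thus all four properties follow from the two Novikov axioms together with the single global fact that the ${\bf u}$-coordinates simultaneously linearize $g^{ij}$ and flatten $\eta$.
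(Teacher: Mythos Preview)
Your argument is correct. The paper itself does not prove this proposition: it simply writes ``Drawing together various results in \cite{Dpencil,M,DS} gives:'' and states the result, deferring all verification to those references. Your proof, by contrast, is self-contained: you fix the ${\bf u}$-frame (flat for $\eta$, with $\Gamma^{ij}_k = -g^{ir}\Gamma^j_{rk}$ constant), compute $du^i\circ du^j = \Gamma^{ij}_k\,du^k$ directly from (\ref{diffconnection}), and then read off the four identities as the Novikov axioms (\ref{N1})--(\ref{N2}), the cocycle property of $\eta$, and a short algebraic consequence of (\ref{N1})--(\ref{N2}) for the $g^\star$-compatibility. The reduction of the $g^\star$-identity to $(a\circ c - c\circ a)\circ b = a\circ(c\circ b) - c\circ(a\circ b)$ via two applications of (\ref{N2}) is clean and correct.

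One small nit: when you invoke ``the cocycle identity (\ref{cocycle})'' for the $\eta^\star$-compatibility, what you actually need is the cocycle \emph{property} $\langle a\circ b,c\rangle = \langle a,c\circ b\rangle$ from Definition~2(b); equation (\ref{cocycle}) is the particular construction of $\eta$ from $\omega$, and its cocycle property is the content of Proposition~3. It would be cleaner to cite Proposition~3 (or Definition~2(b)) there.
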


\noindent In terms of the ${\bf u}$-coordinates (which are flat for the metric $\eta$ defined by (\ref{metricpencil}),
\[
du^i \circ du^j = \Gamma^{ij}_k du^k\,,
\]
but it is important to note that (\ref{diffconnection}) is a coordinate-free definition of the multiplication. We finally note that the endomorphic denoted by $R$ in \cite{Dpencil} is non-invertible for Novikov algebras, and
hence the product $u\cdot v = u \circ R^{-1}(v)\,,$ the essential step in the passage from a Novikov multiplication on the cotangent bundle to a Frobenius structure on the tangent bundle, cannot be made. Finally, note that, as in the Coxeter group orbit space construction, the metric
\[
\eta^{-1} = \eta^{ij} \frac{\partial~}{\partial u^i} \otimes \frac{\partial~}{\partial u^j}
\]
extends across the discriminant where $g^{-1}$ is singular, providing a flat structure on the whole of the orbit space.

\section{Conclusion}\label{conclusion}

The second Hamiltonian structure of the KdV equation also has a multicomponent version - this appearing the in original work of Gelfand and Dorfman \cite{GD}.
This takes the form
\[
{\mathcal{H}}^{ij} =  \left\{ \eta_2^{ij} \frac{d^3~}{dX^3} \right\} + \left\{ (\Gamma^{ij}_r + \Gamma^{ji}_r) u^r \frac{d~}{dX} + \Gamma^{ij}_r u^r_X \right\}
+\left\{ \eta^{ij} \frac{d~}{dX}\right\}\,.
\]
The conditions for this to define a Poisson bracket are again algebraic: one obtains a Novikov algebra with structure constants $\Gamma^{ij}_k$ with cocycle
$\eta^{ij}=\langle e^i,e^j\rangle$ and recall the compatibility condition
\[
\langle a \circ b, c \rangle = \langle a, c \circ b\rangle\,.
\]
The extra requirement comes from the third-order term. On writing $\eta^{ij}_2=\langle e^i, e^j\rangle_2$ this extra condition is
\[
\langle a \circ b , c\rangle_2 = \langle a, b \circ c\rangle_2\,.
\]
If the Novikov algebra is commutative one may take $\eta^{ij}_2=\eta^{ij}$ but in general one has a more restrictive structure. For low-dimensional Novikov algebras - which have been classified
in \cite{BM1,BG} - these cocyles may easily be found \cite{BM1b}. The various terms in brackets in the above expression are all compatible with each other and can be rearranged to form
various biHamiltonian structures and hence integrable hierarchies \cite{SS}.

\medskip

Any local Hamiltonian operator - such as the above third-order operator - may, via the action of the Muira group, be transformed into a constant, or Darboux,
form. This follows from the work of \cite{DZ} and the triviality of certain Poisson cohomology groups \cite{DMS,G}. These cohomology groups describe
the obstructions to the construction of the required Muira transformation, and their triviality shows that no obstructions exist and hence the Muira transformation exists. However, such results do not give these Muira transformations, only their existence. As the results in this paper show, even for first-order operators, where the triviality condition is equivalent to the differential-geometric condition of zero-curvature of a connection, the construction of such Muira transformations - via the solution of the Gauss-Manin equations, is subtle.

\medskip

Thus there remains the problem of the construction of a Muira map for a third-order Hamiltonian operator defined by a Novikov algebra. In the case of a commutative Novikov algebra (i.e. a Frobenius algebra), this was
solved by Balinskii and Novikov \cite{BN}, the transformation is a direct generalization of the original Muira transformation of the KdV equation, namely
\[
u^i = \frac{1}{2} \Gamma^i_{jk} v^j v^k - v^i_X
\]
(here $\Gamma^i_{jk} = \eta_{jr} \Gamma^{ir}_k\,)$. Again, a shorter proof of this may be found in \cite{SZ}.
The results of this paper give the zero-th order term (with respect to the grading defined by the $d_X$-derivatives)  in such a solution, and the triviality of the
cohomology groups imply that there are no obstructions to the calculation of the higher-order terms, just the problem of their explicit calculation.

\medskip

\section*{Acknowledgements}
I would like to thank Prof. Novikov for drawing my attention to the existence of this problem. I would also like to thank Michael Wemyss, Blazej Szablikowski and Dafeng Zuo for various conversations.


\begin{thebibliography}{***}

\bibitem{BK} Burde, D. and Demkimpe, K., {\sl Novikov structures on solvable Lie algebras}, Journal of Geometry and Physics 56, Issue 9, 1837-1855 (2006).



\bibitem{BM1} Bai, C. and Meng, D., {\it The classification of Novikov algebras in low dimensions},
J. Phys. A {\bf 34} (2001) 1581--1594

\bibitem{BM1b} Bai, C. and Meng, D., {\it Addendum: invariant bilinear forms}, J. Phys. A {\bf 34} (2001) 8193--8197

\bibitem{BM2} Bai, C. and Meng, D., {\it Transitive Novikov algebras on four-dimensional nilpotent Lie algebras}, Internat. J. Theoret. Phys. {\bf 40} (2001) 1761--1768

\bibitem{BN} Balinskiii, A.A. and Novikov, S.P., {\it Poisson brackets of hydrodynamic type, Frobenius algebras and Lie algebras}, Soviet Math. Dokl. {\bf 32} (1985) 228--231.

\bibitem{Varchenko} Blok, B. and Varchenko, A., {\sl Topological conformal field theories and the flat coordinates} Int. J. Modern Physics A, March 1992, Vol. 07, No. 07 : pp. 1467-1490.

\bibitem{BG} Burde, D. and de Graaf, W., {\it Classification of Novikov algebras}, Appl. Algebra Engrg. Comm. Comput. {\bf 24} (2013) 1--15

\bibitem{CFG} Cari\~nena, J.F., Falceto, F. and Grabowski, J., {\sl Solvaility of a Lie algebra of vector fields implies their integrability by quadratures}, J.Phys. A. {\bf 49} (2016) 425202.

\bibitem{DS} David, L. and Strachan, I.A.B., {\sl Compatible metrics on a manifold and nonlocal Bi-hamiltonian structures}, IMRN
 Issue: 66   Pages: 3533-3557   Published: 2004

\bibitem{DMS} Degiovanni, L., Magri, F. and Sciacca, V., {\sl On deformation of Poisson manifolds of hydrodynamic type}, Comm. Math. Phys. 253 (2005), no.1, pp. 1-24.


\bibitem{D} Dubrovin, B.A., {\sl Geometry of 2D topological field theories}, in: {\sl Integrable Systems and Quantum Groups}, (Montecatini, Terme, 1993). Lecture Notes Maths. 1620, 120-348.

\bibitem{Dpencil} Dubrovin, B.A. {\sl Flat pencils of metrics and Frobenius manifolds, Integrable systems and algebraic geometry} (Kobe/Kyoto, 1997), 47-72, World Sci. Publ., River Edge, NJ, (1998).

\bibitem{DN} Dubrovin, B.A. and Novikov, S.P., {\it On Poisson brackets of hydrodynamic type}, Soviet Math. Dokl. {\bf 30} (1984) 651--654.

\bibitem{DZ} Dubrovin, B.A. and Zhang, Y., {\sl Normal forms of integrable PDEs, Frobenius manifolds and Gromov-Witten invariants}, math.DG/0108160.

\bibitem{F} Ferapontov, E.V., {\sl Compatible Poisson brackets of hydrodynamic type}, J. Phys. A 34 (2001), pp. 2377-2388.

\bibitem{G} Getzler, E., {\sl A Darboux theorem for Hamiltonian operators in the formal calculus of variations}, Duke Math. J. 111 (2002), pp. 535-560.

\bibitem{Grab} Grabowski, J., {\sl Remarks on nilpotent Lie algebras of vector fields}, J. reine angew. Math. {\bf 406} (1990), 1-4.

\bibitem{GD} Gelfand I.M. and Dorfman, I.Y., {\sl Hamiltonian operators and algebraic structures related to them}, Funct. Anal. Appl. 13 (1979), no. 4, pp. 248-262.

\bibitem{Kaw} Kawski, M., {\sl Nilpotent Lie algebras of vectorfields}, J. reine angew. Math {\bf 388} (1988), 1-17.

\bibitem{K} Knapp, A. W., {\sl Lie groups beyond an introduction. Progress in Mathematics. 120} (2002, 2nd ed.),  Birkhäuser.

\bibitem{M} Mokhov, O.I., {\sl Dubrovin-Novikov type Poisson brackets (DN-brackets)}, Funct. Anal. Appl. 22 (1988), no. 4 , pp. 336-338.

\bibitem{O} Osborn, J.M., {\sl Novikov algebras}, Nova J. Algebra Geom. 1 (1992), no.1, pp. 1-13.

\bibitem{reid} Reid, M. {\sl Surface cyclic quotient singularities and Hirzebruch-Jung resolutions}, available at: http://homepages.warwick.ac.uk/~masda/surf/more/cyclic.pdf

\bibitem{S} Segal, D., {\sl The structure of complete left-symmetric algebras}, Math. Ann. {\bf 293} (1992), 569-578.

\bibitem{SS} Strachan, I.A.B. and Szablikowski, B.M., {\sl Novikov algebras and a classification of multicomponent Camassa-Holm equations}, Stud. Appl. Math. 133 (2014), no. 1, pp. 84-117.

\bibitem{SZ} Strachan, I.A.B and Zuo, D., {\sl Frobenius manifolds and Frobenius algebra-valued integrable systems}, Lett. Math. Phys. {\bf 107} (2017), 997-1026.


\bibitem{Z} Zelmanov, E., {\sl On a class of local translation invariant Lie algebras}, Soviet Math. Dokl. {\bf 35} (1987), 216-218.

\end{thebibliography}
\end{document}